\documentclass[reqno]{amsart}
\usepackage{amssymb,epsfig,graphics,graphicx,bbm,hyperref,color}
\usepackage{pgf,pgfarrows,pgfnodes,pgfautomata,pgfheaps,pgfshade}

\usepackage{multicol}
\usepackage{enumitem}
\usepackage{tikz-cd}
\usetikzlibrary{matrix,arrows,decorations.pathmorphing,positioning}
\usepackage{nicefrac}



\definecolor{gray}{rgb}{0.93,0.93,0.93}
\definecolor{light-gold}{rgb}{0.99,0.97,0.78}

\setlength{\oddsidemargin}{10mm}
\setlength{\evensidemargin}{15mm}
\setlength{\textwidth}{145mm}

\definecolor{gold}{rgb}{0.7,0.55,0}

\def\be{\begin{equation}}
\def\ee{\end{equation}}
\def\bm{\begin{multline}}
\def\bfig{\begin{figure}[htb]}
\def\efig{\end{figure}}
\newcommand{\dd}{{\rm d}}
\newcommand{\e}[1]{\,{\rm e}^{#1}\,}
\newcommand{\ii}{{\rm i}}

\def\Tr{{\operatorname{Tr\,}}}
\def\artanh{{\operatorname{artanh\,}}}
\def\supp{{\rm{supp\,}}}

\newcommand{\scr}[1]{^{\scriptscriptstyle (#1)}}

\numberwithin{equation}{section}
\newtheorem{theorem}{Theorem}[section]
\newtheorem{proposition}[theorem]{Proposition}
\newtheorem{lemma}[theorem]{Lemma}
\newtheorem{corollary}[theorem]{Corollary}

\newcommand{\eps}{{\varepsilon}}
\newcommand{\bbC}{{\mathbb C}}

\newcommand{\bbN}{{\mathbb N}}

\newcommand{\bbR}{{\mathbb R}}

\newcommand{\bbT}{{\mathbb T}}
\newcommand{\bbZ}{{\mathbb Z}}

\newcommand{\caE}{{\mathcal E}}

\newcommand{\caH}{{\mathcal H}}
\newcommand{\caN}{{\mathcal N}}

\newcommand{\sff}{{\mathsf f}}

\newcommand{\frah}{{\mathfrak h}}

\makeatletter
  \def\tagform@#1{\maketag@@@{\scriptsize{(#1)}\@@italiccorr}}
\makeatother

\renewcommand{\eqref}[1]{(\ref{#1})}

\begin{document}


\title[Kac-Ward solution of 2D classical and 1D quantum Ising models]{Kac-Ward solution of the 2D classical and 1D quantum Ising models}
 
\author{Georgios Athanasopoulos}
\address{Department of Mathematics, University of Warwick,
Coventry, CV4 7AL, United Kingdom}
\email{Georgios.Athanasopoulos@warwick.ac.uk}
 
\author{Daniel Ueltschi}
\address{Department of Mathematics, University of Warwick,
Coventry, CV4 7AL, United Kingdom}
\email{daniel@ueltschi.org}

\subjclass{82B05; 82B20; 82B23}


\begin{abstract}
We give a rigorous derivation of the free energy of (i) the classical Ising model on the triangular lattice with translation-invariant coupling constants, and (ii) the one-dimensional quantum Ising model.  We use the method of Kac and Ward.
The novel aspect is that the coupling constants may have negative signs.
We describe the logarithmic singularity of the specific heat of the classical model and the validity of the Cimasoni--Duminil-Copin--Li formula for the critical temperature. We also discuss the quantum phase transition of the quantum model.
\end{abstract}

\thanks{\copyright{} 2024 by the authors. This paper may be reproduced, in its entirety, for non-commercial purposes.}

\maketitle

\section{Introduction}

Onsager's calculation in 1944 of the free energy of the Ising model on the square lattice was a remarkable achievement \cite{Ons}. It helped to characterise the nature of the phase transition and yielded some critical exponents. Onsager's method was algebraic in nature and was simplified by Kaufman \cite{Kau}. The formula for the Ising free energy on the triangular lattice was first found by Houtappel \cite{Hou} in 1950; he used a simplified version of Kaufman's method with more elementary group theory. Further works on the triangular lattice (or its dual, the hexagonal lattice) include Wannier \cite{Wan}, and Husimi and Syozi \cite{HS1,HS2}.

After the work of Onsager and Kaufman, people found two alternate approaches: combinatorial and fermionic. The former was proposed in 1952 by Kac and Ward \cite{KW}; it was later extended by Kasteleyn who noted the connection with dimer systems  \cite{Kas} (see also Temperley and Fisher \cite{TF}). Potts \cite{Pot} and Stephenson \cite{Ste} used the Kac-Ward method on the triangular lattice, for the free energy and for correlation functions. The fermionic method was proposed in 1964 by Schultz, Mattis, and Lieb \cite{SML}.

In this article we use the Kac-Ward approach. It consists of two parts. First is a remarkable identity that relates the partition function of the Ising model to (the square-root of) the determinant of a suitable matrix; this holds for arbitrary planar graphs. Second, one uses the Fourier transform to block-diagonalise the matrix so as to obtain its determinant. The latter step involves a ``mild" modification of the matrix to make it periodic; this mild step has been used over the years without mathematical justification. Only recently, careful analyses have been proposed by Kager, Lis, and Meester \cite{KLM} (see \cite{Lis} for a clear description) and by Aizenman and Warzel \cite{AW} (who elucidate the connection to the graph zeta function). These analyses are restricted to nonnegative coupling constants. Another line of research is the determination of the critical temperature for general two-periodic planar graphs by Li \cite{Li} and Cimasoni and Duminil-Copin \cite{CDC}; this uses the results of Kenyon, Okounkov and Sheffield \cite{KOS} for dimer systems.

The main goal of this article is to extend the Kac-Ward method to the case of (translation-invariant) coupling constants of arbitrary signs. We work on the triangular lattice, which is the simplest case of frustrated systems with translation-invariant coupling constants. We start with the Cimasoni extension of the Kac-Ward formula to ``faithful projections" of non-planar graphs \cite{Cim} (see also Aizenman and Warzel \cite{AW} for a clear exposition). We use it for the torus $\{1,\dots,L\}_{\rm per} \times \{1,\dots,M\}_{\rm per}$ with periodic boundary conditions. The main difficulties involve the non-planarity of the graph. We prove that these difficulties vanish in the limit $L\to\infty$ for fixed $M$. Then we can use the Fourier transform and we obtain the free energy formula for the infinite cylinder $\bbZ \times \{1,\dots,M\}_{\rm per}$. The Onsager-Houtappel formula immediately follows by taking the limit $M\to\infty$.

As is well-known, the exact form of the free energy allows to establish the occurrence of a phase transition characterised by the divergence of the specific heat (the second derivative of the free energy with respect to the temperature). We discuss cases where this phase transition occurs, or fails to occur.

Our result for cylinders allows us to consider the one-dimensional quantum Ising model, whose free energy was first calculated in 1970 by Pfeuty \cite{Pfe}. We refer to \cite{Gri,BG,Iof,CI,Bjo,Li2,Tas} for recent studies. The quantum Ising model can be mapped to a 2D classical Ising model in the limit where the extra dimension becomes continuous. We also discuss the occurrence of a ``quantum phase transition".
 
The paper is organised as follows:
We state our main theorem about the free energy of the Ising model on triangular lattices in Section \ref{sec free energy}. We then discuss the possibility of a phase transition in the form of logarithmic singularity of the specific heat in Section \ref{sec log singularity}. In Section \ref{sec special case} we consider the special case where two coupling constants are equal; we show that the Cimasoni--Duminil-Copin--Li formula (see Eq.\ \eqref{CDCL}) may yield the correct critical temperature even when the couplings are not all positive. The derivation of the free energy is described in Section \ref{sec Kac-Ward}. The quantum Ising model is discussed in Section \ref{sec qu Ising}; we describe the quantum phase transition at the end of the section.

\section{The classical Ising model on the triangular lattice}
\label{sec class Ising}

\subsection{The free energy}
\label{sec free energy}

We view the triangular lattice as a square lattice with additional North-East edges.
Let $L,M \in \bbN$. Let $\bbT_L$ be the torus of $L$ sites, $\bbT_L \simeq \bbZ \setminus L \bbZ$, and let $\bbT_{L,M}$ the two-dimensional torus
\be
\bbT_{L,M} = \bbT_L \times \bbT_M.
\ee
We let $\caE_{L,M} = \caE_{L,M}^{\rm hor} \cup \caE_{L,M}^{\rm ver} \cup \caE_{L,M}^{\rm obl}$ denote the set of edges of $\bbT_{L,M}$ where
\begin{eqnarray*}
&\caE_{L,M}^{\rm hor} = \bigl\{ \{x,x+e_1\} : x \in \bbT_{L,M} \bigr\} \qquad &\text{(horizontal edges)} \\
&\caE_{L,M}^{\rm ver} = \bigl\{ \{x,x+e_2\} : x \in \bbT_{L,M} \bigr\} \qquad &\text{(vertical edges)} \\
&\caE_{L,M}^{\rm obl} = \bigl\{ \{x,x+e_1+e_2\} : x \in \bbT_{L,M} \bigr\} \qquad &\text{(oblique North-East edges)}
\end{eqnarray*}
This is illustrated in Fig.\ \ref{fig lattice}.
Let $J_1,J_2,J_3 \in \bbR$ be three parameters; we define the coupling constants $(J_e)_{e \in \caE_{L,M}}$ to be
\be
\label{def couplings}
J_e = \begin{cases} J_1 & \text{if } e \in \caE_{L,M}^{\rm hor}, \\ J_2 & \text{if } e \in \caE_{L,M}^{\rm ver}, \\ J_3 & \text{if } e \in \caE_{L,M}^{\rm obl}. \end{cases}
\ee

\begin{figure}[htb]\center
\includegraphics[width=9cm]{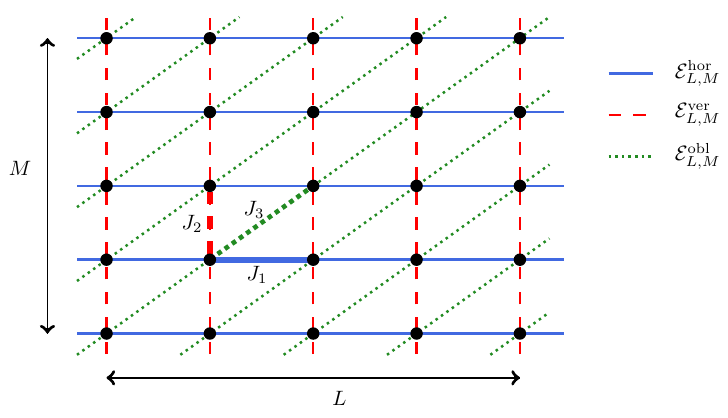}
\caption{Our lattice is the torus $\bbT_{L,M}$ with horizontal, vertical, and North-East edges.}
\label{fig lattice}
\end{figure}

A spin configuration $\sigma$ is an assignment of a classical spin $\pm1$ to each site of $\bbT_{L,M}$, $\sigma = (\sigma_x)_{x \in \bbT_{L,M}} \in \{-1,+1\}^{\bbT_{L,M}}$. The Ising hamiltonian is the function of spin configurations given by
\be
H_{L,M}(\sigma) =  -\sum_{e = \{x,y\} \in \caE_{L,M}} J_e \sigma_x \sigma_y.
\ee
The partition function is
\be
\label{def part fct}
Z_{L,M}(J_1,J_2,J_3) = \sum_\sigma \e{-H_{L,M}(\sigma)}
\ee
and the finite-volume free energy density is
\be
f_{L,M}(J_1,J_2,J_3) = -\frac1{LM} \log Z_{L,M}(J_1,J_2,J_3).
\ee

We consider two infinite-volume limits, to the infinite cylinder and to the plane. Namely, we define
\be
\begin{split}
f_M(J_1,J_2,J_3) &= \lim_{L\to\infty}  f_{L,M}(J_1,J_2,J_3); \\
f(J_1,J_2,J_3) &= \lim_{L\to\infty}  f_{L,L}(J_1,J_2,J_3).
\end{split}
\ee
As is well-known we can consider arbitrary van Hove sequences of increasing domains, see e.g.\ \cite{FV}, and we also get $f(J_1,J_2,J_3)$. The next theorem gives the free energy for the cylinder and for the two-dimensional lattice. 
The cylinder formula turns out to be convenient and it is useful in the calculation of the 1D quantum Ising model.

\begin{theorem}
\label{thm triangle Ising}
For any $J_1, J_2, J_3 \in \bbR$ we have (with $k_3 = k_1 + k_2$):
\begin{itemize}
\item[(a)] On the cylinder $\bbZ \times \bbT_M$:
\begin{multline*}
f_M(J_1,J_2,J_3) =  -\log2 - \frac1{4\pi M} \int_{-\pi}^{\pi} \dd k_1 \sum_{k_2 \in \widetilde{\bbT}_M} \log \biggl[ \prod_{i=1}^3 \cosh(2J_i)  + \prod_{i=1}^3 \sinh(2J_i) \\
- \sum_{i=1}^3 \sinh(2J_i) \cos k_i \biggr]
\end{multline*}
where $\widetilde{\bbT}_M =\frac{2\pi}{M} \bbT_M + \frac{\pi}{M}$.
\item[(b)] On the square or triangular lattice:
\begin{multline*}
f(J_1,J_2,J_3) = 
 -\log2 - \frac1{8\pi^2} \int_{[-\pi,\pi]^2} \dd k_1 \dd k_2 \log \biggl[ \prod_{i=1}^3 \cosh(2J_i) + \prod_{i=1}^3 \sinh(2J_i) \\
 - \sum_{i=1}^3 \sinh(2J_i) \cos k_i \biggr].
\end{multline*}
\end{itemize}
\end{theorem}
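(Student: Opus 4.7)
The plan is to apply the Kac--Ward/Cimasoni identity on the torus $\bbT_{L,M}$, block-diagonalise by Fourier transform in both coordinates, send $L\to\infty$ with $M$ fixed to obtain (a), and then $M\to\infty$ to obtain (b).

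First I would start from the high-temperature expansion
\be
Z_{L,M}(J_1,J_2,J_3) = 2^{LM} \prod_{e \in \caE_{L,M}} \cosh(J_e) \; \sum_{\gamma} \prod_{e \in \gamma} \tanh(J_e),
\ee
where $\gamma$ ranges over even subgraphs of $\caE_{L,M}$. By Cimasoni's extension of the Kac--Ward formula to faithful projections of graphs on surfaces \cite{Cim}, the subgraph sum equals $\tfrac12 \sum_\eta \varepsilon(\eta) \sqrt{\det(I - T_\eta)}$, where $\eta$ ranges over the four spin structures of the torus, $T_\eta$ is the Kac--Ward transition matrix on oriented edges (with entries $\tanh(J_e)\e{\ii\alpha/2}$, $\alpha$ being the turning angle) twisted by $\pm 1$ along the two homologically nontrivial cycles of $\bbT_{L,M}$, and $\varepsilon(\eta)\in\{\pm 1\}$ are combinatorial signs. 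Signed couplings cause no trouble at this stage since the Kac--Ward weights are merely real scalars.

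Next, translation invariance block-diagonalises each $T_\eta$ under the Fourier transform. The spin structure $\eta$ dictates whether the momenta take values in $\tfrac{2\pi}{L}\bbT_L$ or in the shifted grid $\tfrac{2\pi}{L}\bbT_L + \tfrac{\pi}{L}$, and similarly in the $M$-direction; this is the origin of the shifted lattice $\widetilde{\bbT}_M = \tfrac{2\pi}{M}\bbT_M + \tfrac{\pi}{M}$ in the statement. At each momentum $(k_1,k_2)$ the diagonal block is a $6\times 6$ matrix (indexed by the six directed edge types at a vertex) whose determinant can be computed in closed form. After absorbing the $\cosh(J_e)$ prefactor, a direct manipulation using the double-angle identities rewrites the per-mode determinant as
\be
D(k_1,k_2) = \prod_{i=1}^3 \cosh(2J_i) + \prod_{i=1}^3 \sinh(2J_i) - \sum_{i=1}^3 \sinh(2J_i) \cos k_i
\ee
with $k_3 = k_1+k_2$, which is exactly the bracketed expression appearing in the theorem.

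Third, for fixed $M$ and $L\to\infty$, each of the four spin-structure determinants has logarithm asymptotic to $L\,\Phi_\eta(M) + o(L)$. The shift $\pi/L$ in the $k_1$-grid vanishes in the limit, so all four $\Phi_\eta(M)$ coincide with the common value
\be
\Phi(M) = \frac{1}{4\pi M}\int_{-\pi}^\pi \dd k_1 \sum_{k_2 \in \widetilde{\bbT}_M} \log D(k_1,k_2).
\ee
Provided the signed combination $\tfrac12 \sum_\eta \varepsilon(\eta)\sqrt{\det(I-T_\eta)}$ does not cancel to leading exponential order, one deduces $\lim_{L\to\infty} \tfrac{1}{LM}\log Z_{L,M} = \log 2 + \Phi(M)$, which proves (a). For (b), send $M\to\infty$ in (a): the Riemann sum $\tfrac{1}{M}\sum_{k_2 \in \widetilde{\bbT}_M}$ converges to $\tfrac{1}{2\pi}\int_{-\pi}^\pi \dd k_2$, and the equality $\lim_M f_M = f$ follows from standard van Hove convergence of the free energy \cite{FV}.

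The main obstacle is the third step: one must rule out leading-order cancellations among the four signed Pfaffian-type terms and cope with the fact that an individual $\det(I - T_\eta)$ can become negative when the $J_i$ have mixed signs, requiring careful treatment of the branches of the square root. Demonstrating that the signed combination is nevertheless positive and grows as $\e{L\Phi(M)+o(L)}$ uniformly in the signs of the couplings is precisely where the present work must depart from the ferromagnetic analyses of \cite{KLM,AW}, and constitutes the genuinely new input of the derivation.
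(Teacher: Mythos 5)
There is a genuine gap, and you have in fact located it yourself: your third step is conditional on the clause ``provided the signed combination $\tfrac12\sum_\eta\varepsilon(\eta)\sqrt{\det(I-T_\eta)}$ does not cancel to leading exponential order,'' and your closing paragraph concedes that establishing this is ``precisely where the present work must depart'' from the ferromagnetic case. That is correct --- but it means the one step that constitutes the actual content of the theorem for couplings of mixed sign is missing from your argument. For nonnegative couplings one can show $\langle(-1)^{n_0(\Gamma)}\rangle\to1$ (Aizenman--Warzel), but with negative couplings the weights $w(\Gamma)$ are signed and no such monotonicity or correlation-inequality argument is available, so the asymptotics of each individual Pfaffian-type term, and of their signed sum, is genuinely unclear a priori.

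The paper closes this gap by a different combinatorial setup. Rather than summing over all four spin structures with signs, it picks \emph{two} specific faithful projections $G_1,G_2$ of the torus, chosen so that the crossing-parity indicators satisfy $1_{n_0^{\scriptscriptstyle(1)}(\Gamma)\,\mathrm{odd}}+1_{n_0^{\scriptscriptstyle(2)}(\Gamma)\,\mathrm{odd}}=1_{n_{\rm h}(\Gamma)\,\mathrm{odd}}$, where $n_{\rm h}$ counts horizontal handles. Hence the \emph{unsigned} sum of the two square roots equals $2\widetilde Z_{L,M}\bigl(1-\langle 1_{n_{\rm h}(\Gamma)\,\mathrm{odd}}\rangle_{L,M}\bigr)$ (Lemma \ref{lem Georgios}); the correction is now the average of an indicator rather than of a sign. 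This average is then shown to vanish as $L\to\infty$ at fixed $M$ by conditioning on the set of horizontal handles and expressing the conditional weight as a spin correlator on the open cylinder: the transfer matrix in the $L$-direction has strictly positive entries, Perron--Frobenius applies, and the spin-flip symmetry of the leading eigenvector kills any correlator involving an odd number of boundary spins (Lemma \ref{lem Perron Frobenius}). This is also why the cylinder limit $L\to\infty$ at fixed $M$ must be taken first --- the spectral gap depends on $M$ --- after which $M\to\infty$ is handled by the subadditivity lemma, as you propose. Finally, the nonnegativity of each $\det(1-\widetilde K\widetilde W^{\scriptscriptstyle(i)})$, which you flag as a branch-of-square-root worry, comes for free from the Kac--Ward identity itself: the square root equals the manifestly real quantity $\widetilde Z_{L,M}\langle(-1)^{n_0^{\scriptscriptstyle(i)}(\Gamma)}\rangle_{L,M}$ (Corollary \ref{Cor}(a)). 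The remaining steps of your outline (Fourier block-diagonalisation into $6\times6$ blocks, the shifted momentum grids, the double-angle rewriting of the per-mode determinant, and the Riemann-sum passage to (b)) match the paper.
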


Setting $J_3=0$ and $J_1=J_2=J$  we get Onsager's formula for the isotropic Ising model  on the square lattice, namely
\be
f(J,J,0)=-\log2 - \frac{1}{8\pi^2} \int_{[0,2\pi]^2} \dd k_1 \dd k_2 \log\Big[\cosh^2(2J)-\sinh(2J) (\cos k_1 + \cos k_2 )\Big].
\ee

The proof of part (a) of the theorem can be found at the end of Section \ref{sec Kac-Ward}.
The next lemma establishes that $f$ is equal to the limit $M\to\infty$ of $f_M$ so that (b) immediately follows from (a).

\begin{lemma}
As $M\to\infty$ the cylinder free energy density converges to the two-dimensional free energy density:
\[
f(J_1,J_2,J_3) = \lim_{M\to\infty} f_M(J_1,J_2,J_3).
\]
\end{lemma}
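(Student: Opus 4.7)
The plan is to deduce the lemma from the van~Hove convergence cited just before the theorem (see \cite{FV}), by a short diagonal argument. The key observation is that for any sequence $(L_n, M_n)$ of positive integers with $L_n, M_n \to \infty$, the tori $\bbT_{L_n, M_n}$ form a van~Hove sequence: viewed as boxes in $\bbZ^2$ their boundary-to-volume ratio is $O(1/L_n + 1/M_n) \to 0$. Hence $f_{L_n, M_n} \to f(J_1, J_2, J_3)$.

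From this the lemma follows in one line. By the very definition $f_M = \lim_{L\to\infty} f_{L, M}$, for each $M \ge 1$ we can pick $L_M \ge M$ large enough that $|f_{L_M, M} - f_M| \le 1/M$. The resulting sequence $(L_M, M)_{M \ge 1}$ satisfies $L_M, M \to \infty$, so $f_{L_M, M} \to f$ by the previous paragraph; combined with $f_M - f_{L_M, M} \to 0$ this yields $f_M \to f$.

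The one mild technical point is that the van~Hove convergence cited is typically formulated with free or fixed boundary conditions, whereas $\bbT_{L,M}$ carries periodic boundary conditions. This is handled by the standard ``seam'' estimate
\[
\bigl|\log Z^{\rm per}_{L,M} - \log Z^{\rm free}_{L,M}\bigr| \le 2(L+M)\max_{i=1,2,3}|J_i|,
\]
obtained because only $L + M$ edges distinguish the two conventions and each contributes a bounded multiplicative factor to the partition function. Dividing by $LM$ gives $|f^{\rm per}_{L, M} - f^{\rm free}_{L, M}| = O(1/L + 1/M) \to 0$, so periodic and free boundary conditions share the same van~Hove limit. No further obstacle is anticipated; in particular the argument does not rely on any sign assumption on the $J_i$, which is crucial since the main novelty of the paper is to allow negative couplings.
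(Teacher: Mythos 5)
Your argument is correct, but it takes a genuinely different route from the paper. You deduce the lemma from the general van~Hove convergence that the paper cites (but does not use) just before Theorem~\ref{thm triangle Ising}: any sequence of boxes with both side lengths tending to infinity is van~Hove, so a diagonal choice $L_M\ge M$ with $|f_{L_M,M}-f_M|\le 1/M$ finishes the proof, and the periodic-versus-free discrepancy is absorbed by a seam estimate of order $O(L+M)$ in $\log Z$. (Minor quibble: on this lattice the number of wrap-around edges is about $2(L+M)$ rather than $L+M$, because the oblique edges wrap across both the right column and the top row; your final bound $2(L+M)\max_i|J_i|$ still holds since each edge contributes at most $\max_i|J_i|$, and in any case only the order matters.) The paper instead gives a self-contained, quantitative argument: writing $L=kM+R$, it cuts the $L\times M$ torus into $k$ blocks of size $M\times M$, bounds the cost of each cut by the coupling strength times the cut length, and obtains the explicit estimate $|f_M - f_{M,M}|\le 4J_0/M$ with $J_0=\max_i|J_i|$, from which the lemma follows since $f_{M,M}\to f$ by definition. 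Your route is shorter and works verbatim for any model with a known van~Hove limit, but it outsources the real content to the cited general theory; the paper's route uses only elementary partition-function estimates and yields an explicit rate of convergence. Both correctly avoid any sign assumption on the couplings.
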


\begin{proof}
We omit the dependence on coupling constants to alleviate the notation. Let $J_0 = \max_{i=1,2,3} |J_i|$.
Writing $L = kM+R$ with $R \in \{0,M-1\}$ we have
\be
Z_{M,M}^k \e{-4 J_0 kM - 6 J_0 RM} \leq Z_{L,M} \leq Z_{M,M}^k \e{4 J_0 kM + 6 J_0 RM}.
\ee
Taking the logarithm and dividing by $LM$ we get
\be
\tfrac{kM}L f_{M,M} + \tfrac{4 J_0 k + 6 J_0 R}L \geq f_{L,M} \geq \tfrac{kM}L f_{M,M} - \tfrac{4 J_0 k + 6 J_0 R}L.
\ee
We take the limit $L\to\infty$; since $kM/L \to 1$, $k/L \to 1/M$, and $R/L \to 0$ we obtain
\be
f_{M,M} + \tfrac{4 J_0}M \geq \lim_{L\to\infty} f_{L,M} \geq f_{M,M} - \tfrac{4 J_0}M.
\ee
The lemma follows by taking the limit $M\to\infty$.
\end{proof}

\begin{figure}[htb]\center
\includegraphics[width=6cm]{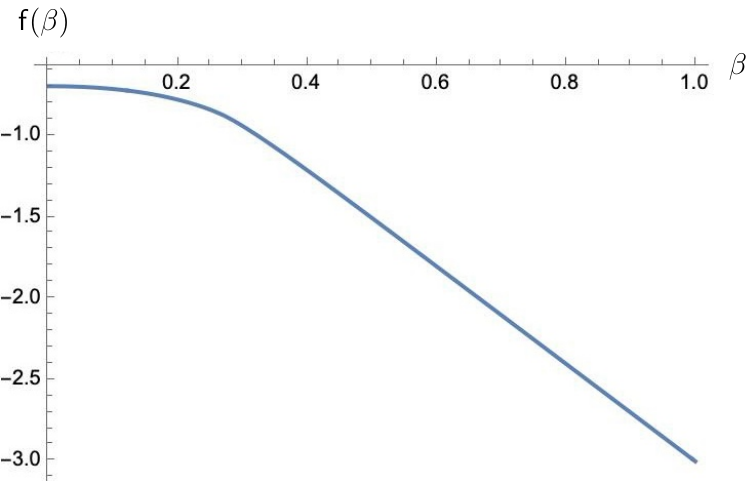}

\includegraphics[width=6cm]{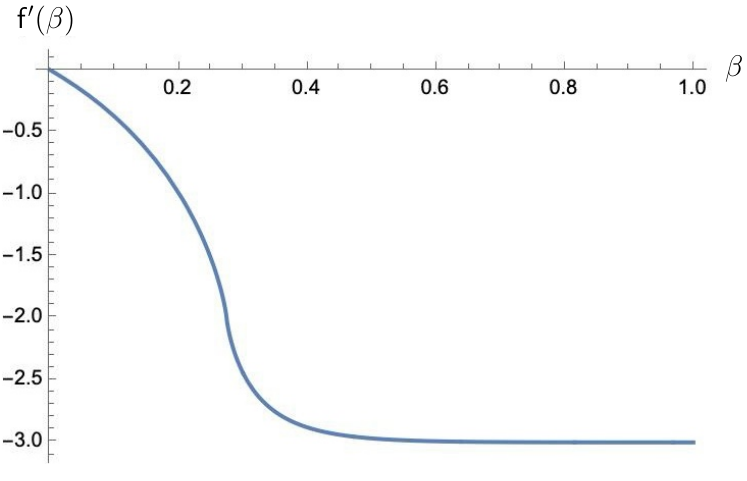}

\includegraphics[width=6cm]{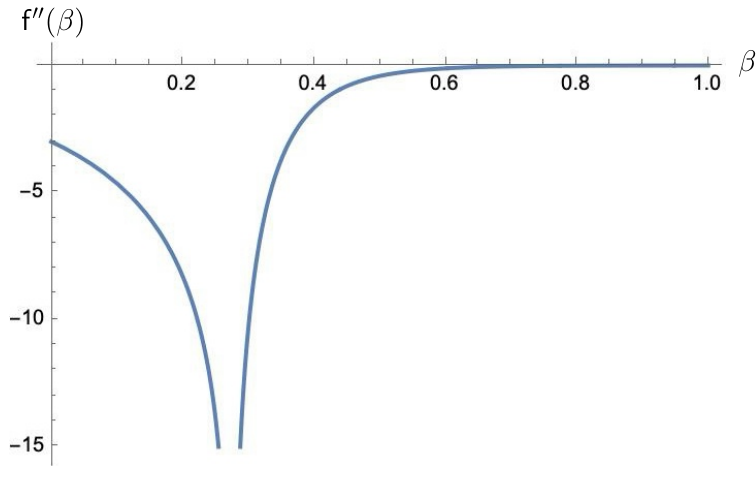}
\caption{Plots of the free energy $\mathsf f(\beta)$ and its first and second derivatives for the translation-invariant triangular lattice ($J_1=J_2=J_3=1$). The second derivative has a logarithmic singularity at $\beta_{\rm c} = \frac14 \log3 = 0.274...$.}
\label{fig plots}
\end{figure}

\subsection{Logarithmic singularity of the specific heat}
\label{sec log singularity}

We explore the consequences of the formula of Theorem \ref{thm triangle Ising} (b) regarding the possibility of phase transitions. More specifically, given fixed parameters $J_1, J_2, J_3$, we consider the function $\sff : \bbR_+ \to \bbR$:
\be
\sff(\beta) = f(\beta J_1, \beta J_2, \beta J_3).
\ee
We are looking for values of $\beta$ where $\sff$ is not analytic. We show the well-known fact that the second derivative of $\sff$ (which is related to the physical quantity called the specific heat) has a logarithmic singularity at a special value $\beta_{\rm c}$, called the critical point. This is illustrated in Fig.\ \ref{fig plots}, which displays the free energy $\sff(\beta)$ and its first and second derivatives in the case of the homogenous triangular lattice ($J_1=J_2=J_3=1$).
By Theorem \ref{thm triangle Ising} (b) we have
\be
\label{free energy 2}
\sff(\beta) = -\log 2 - \frac1{8\pi^2} \int_{[-\pi,\pi]^2} \dd k_1 \dd k_2 \, \log \bigl[ g(\beta) + h(\beta;k_1,k_2) \bigr],
\ee
where (recalling that $k_3 = k_1+k_2$)
\be
\begin{split}
&g(\beta) = \prod_{i=1}^3 \cosh(2\beta J_i) + \prod_{i=1}^3 \sinh(2\beta J_i) - \sum_{i=1}^3 \sinh(2\beta J_i), \\
&h(\beta; k_1, k_2) = \sum_{i=1}^3 \sinh(2\beta J_i) \, (1 - \cos k_i).
\end{split}
\ee
It turns out that the term inside the logarithm is always positive.

\begin{lemma}
\label{lem simply confusing}
For all $J_1, J_2, J_3 \in \bbR$, all $\beta>0$, and all $k_1, k_2 \in [-\pi,\pi]$, we have
\[
g(\beta) + h(\beta;k_1,k_2) \geq 0.
\]
\end{lemma}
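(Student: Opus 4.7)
The plan is to reduce the claim to an algebraic statement via the substitution $y_i=e^{2\beta J_i}>0$ and $Y=y_1 y_2 y_3$, and then combine an explicit verification at four ``corner'' configurations with a soft argument that exploits the reality of the free energy. Using $\cosh(2\beta J_i)=(y_i+y_i^{-1})/2$ and $\sinh(2\beta J_i)=(y_i-y_i^{-1})/2$, a direct calculation gives
\[
4Y(g+h)=Y^2+\sum_{i=1}^{3} y_i^{2}-2Y\sum_{i=1}^{3} y_i\cos k_i+2\bigl(y_2 y_3\cos k_1+y_1 y_3\cos k_2+y_1 y_2\cos k_3\bigr),
\]
with $k_3=k_1+k_2$. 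Since $Y>0$, proving the lemma is equivalent to proving that this trigonometric polynomial in $(k_1,k_2)$ is nonnegative.

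At the four corner configurations $(k_1,k_2)\in\{0,\pi\}^{2}$ the constraint $k_3=k_1+k_2$ automatically forces $\cos k_3\in\{\pm1\}$, and direct substitution collapses $4Y(g+h)$ into a perfect square: for instance at $(0,0)$ one finds $4Y(g+h)=(Y-y_1-y_2-y_3)^{2}$, at $(\pi,\pi)$ one finds $(Y+y_1+y_2-y_3)^{2}$, and analogously $(Y-y_1+y_2+y_3)^{2}$ at $(0,\pi)$ and $(Y+y_1-y_2+y_3)^{2}$ at $(\pi,0)$. Nonnegativity at the corners is therefore immediate.

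To extend the inequality to arbitrary $(k_1,k_2)$, I would appeal to Theorem~\ref{thm triangle Ising}(a), whose derivation in Section~\ref{sec Kac-Ward} is self-contained and does not use this lemma. The partition function $Z_{L,M}=\sum_\sigma\e{-H_{L,M}(\sigma)}$ is a finite sum of positive exponentials, hence strictly positive, so $f_{L,M}$ is a real number for every $L,M$, and therefore $f_M=\lim_{L\to\infty}f_{L,M}$ is real as well. If $g+h$ were negative on a subset $E$ of $[-\pi,\pi]\times\widetilde{\bbT}_M$ of positive Lebesgue measure, then $\log(g+h)$ would contribute a nonvanishing imaginary part (via the principal branch) to the integral appearing in Theorem~\ref{thm triangle Ising}(a), contradicting the reality of $f_M$. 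Hence $g+h\geq 0$ almost everywhere on the torus, and by analyticity of $g+h$ in $(k_1,k_2)$ the inequality propagates to every point.

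The main obstacle, which this short route sidesteps, is a purely algebraic verification on the interior of the torus: the stationarity equations for $g+h$ read $a_1\sin k_1=a_2\sin k_2=-a_3\sin k_3$, from which an interior critical point must in addition satisfy $a_1 a_3\cos k_1+a_2 a_3\cos k_2=-a_1 a_2$; substituting these relations back into $g+h$ produces a closed expression whose sign depends in an unwieldy way on the sign pattern of $(a_1,a_2,a_3)$. The reality-of-$f_M$ argument avoids this case-analysis entirely and is, in a sense, exactly what makes the Kac-Ward formula useful once the signs of the $J_i$ are arbitrary.
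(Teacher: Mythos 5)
Your algebraic reduction and the corner computation are correct: with $y_i=\e{2\beta J_i}$ and $Y=y_1y_2y_3$ one indeed has $4Y(g+h)=Y^2+\sum_i y_i^2-2Y\sum_i y_i\cos k_i+2(y_2y_3\cos k_1+y_1y_3\cos k_2+y_1y_2\cos k_3)$, and at the four points of $\{0,\pi\}^2$ this collapses to the perfect squares you list. The gap is in the extension step, which is circular. Theorem \ref{thm triangle Ising}(a), read with the principal branch of the logarithm, is not available to you independently of Lemma \ref{lem simply confusing}: what the derivation in Section \ref{sec Kac-Ward} establishes without the lemma is that the finite-volume determinants $\det(1-\widetilde K\widetilde W\scr{i})=\prod_{k} F(k)$ (product over the momentum grid, $F$ the bracket) are nonnegative, by Corollary \ref{Cor}(a); passing to Riemann sums then yields $f_M$ as an integral of $\log|F|$, not of $\log F$ with a chosen complex branch. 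If $F$ were negative on a set of positive measure, the number of negative factors on each grid would simply be even (this is exactly what $\det\geq 0$ forces), the identity $\frac1L\log\det=\frac1L\sum_k\log|F(k)|$ would still hold, and the limit $\int\log|F|$ would be perfectly real. No imaginary part ever appears in the honestly derived formula, so the reality of $f_M$ produces no contradiction; you are assuming the theorem in precisely the form that presupposes $F\geq0$.

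Note also that the finite-volume information $\prod_k F(k)\geq0$ is weaker than it looks: since $F(k)=F(-k)$ and the grids are symmetric under $k\mapsto-k$, the grid points pair up and each product is automatically a perfect square up to the contribution of at most the unpaired self-symmetric points, which lie in $\{0,\pi\}^2$ --- essentially the corners you already handled directly. The paper's own route is to derive the lemma from Corollary \ref{Cor}(a), i.e.\ from the reality of $\sqrt{\det(1-K\scr{i}W)}$ guaranteed by the finite-volume Kac--Ward identity \eqref{Kac-Ward}; your reality-of-$f_M$ argument is the infinite-volume shadow of this, but by that stage the branch ambiguity has already been resolved against you. To close the gap you would need either the interior critical-point analysis you describe as the ``main obstacle'', or some additional input (for instance a family of Kac--Ward identities with twisted boundary phases, shifting the grids so that individual factors are pinned down) rather than the soft reality argument.
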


There should be a simple direct proof for this lemma but we could not find one (in the case where $J_1=J_2$, it follows from the proof of Theorem \ref{thm special case} below). Instead we obtain it in Section \ref{sec Kac-Ward} using suitable Kac-Ward identities, see Corollary \ref{Cor} (a).
We now give a criterion for the free energy to be analytic in $\beta$.

\begin{lemma}
Assume that $g(\beta_0) + h(\beta_0;k_1,k_2) > 0$ for all $k_1,k_2 \in [-\pi,\pi]$. Then $\sff(\beta)$ is analytic in a complex neighbourhood of $\beta_0$.
\end{lemma}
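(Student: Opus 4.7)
The plan is to extend $g(\beta)+h(\beta;k_1,k_2)$ to an entire function of a complex variable $\beta$ and then apply Morera's theorem to the integral representation \eqref{free energy 2}. Since $\cosh$ and $\sinh$ are entire, for each fixed $(k_1,k_2)\in[-\pi,\pi]^2$ the map $\beta\mapsto g(\beta)+h(\beta;k_1,k_2)$ extends to an entire function of $\beta\in\bbC$, and the joint map $(\beta,k_1,k_2)\mapsto g(\beta)+h(\beta;k_1,k_2)$ is continuous on $\bbC\times[-\pi,\pi]^2$.

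First I would use compactness of $[-\pi,\pi]^2$ together with the hypothesis to produce a uniform lower bound: some $\delta>0$ such that $g(\beta_0)+h(\beta_0;k_1,k_2)\geq 2\delta$ for every $(k_1,k_2)$. By uniform continuity of the complex extension on a compact neighbourhood of $\{\beta_0\}\times[-\pi,\pi]^2$, I can choose an open disc $D\subset\bbC$ centred at $\beta_0$ so that
\[
\bigl| g(\beta)+h(\beta;k_1,k_2) - g(\beta_0) - h(\beta_0;k_1,k_2) \bigr| < \delta
\]
for all $\beta\in D$ and all $(k_1,k_2)\in[-\pi,\pi]^2$. Consequently $\Re[g(\beta)+h(\beta;k_1,k_2)]\geq\delta$ throughout $D\times[-\pi,\pi]^2$, the image avoids the cut $(-\infty,0]$, and the principal logarithm gives a holomorphic function $\beta\mapsto\log[g(\beta)+h(\beta;k_1,k_2)]$ on $D$ for every $(k_1,k_2)$, jointly continuous and uniformly bounded on $\overline{D'}\times[-\pi,\pi]^2$ for any closed sub-disc $\overline{D'}\subset D$.

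Next I would verify analyticity of $F(\beta)=\int_{[-\pi,\pi]^2}\log[g(\beta)+h(\beta;k_1,k_2)]\,\dd k_1\,\dd k_2$ via Morera's theorem. For any closed triangle $\Delta\subset D$, Fubini (justified by the uniform bound just obtained) gives
\[
\oint_{\partial\Delta} F(\beta)\,\dd\beta \;=\; \int_{[-\pi,\pi]^2}\dd k_1\,\dd k_2 \oint_{\partial\Delta}\log[g(\beta)+h(\beta;k_1,k_2)]\,\dd\beta,
\]
and the inner contour integral vanishes by Cauchy's theorem since the integrand is holomorphic in $\beta$ on $D$. Morera's theorem then yields that $F$, and hence $\sff(\beta)=-\log 2 - \frac{1}{8\pi^2}F(\beta)$, is holomorphic on a complex neighbourhood of $\beta_0$.

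The main obstacle is the uniform selection of a single-valued holomorphic branch of $\log$ that works simultaneously for all $(k_1,k_2)$: this is precisely what the uniform positivity of $\Re(g+h)$ from the first step delivers, and it relies on compactness of the Brillouin zone $[-\pi,\pi]^2$ together with joint continuity of the complex extension. Once that bound is in hand, the passage from pointwise holomorphicity to holomorphicity of the integrated function via Morera plus Fubini is routine.
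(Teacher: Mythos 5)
Your proposal is correct and follows essentially the same route as the paper's own proof: holomorphy of the integrand in $\beta$ on a complex neighbourhood, vanishing contour integrals, Fubini, and Morera. You additionally spell out the compactness and uniform-continuity argument that produces the neighbourhood and the single-valued branch of the logarithm, which the paper leaves implicit.
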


\begin{proof}
This is a standard complex analysis argument. There exists a complex neighbourhood $\caN$ of $\beta_0$ such that $\log [g(\beta) + h(\beta;k_1,k_2)]$ is analytic in $\beta$ for each $k_1,k_2$. Then $\int_\gamma \log [g(\beta) + h(\beta;k_1,k_2)] \dd\beta = 0$ for any contour $\gamma$ in $\caN$. By Fubini's theorem,
\bm
\int_\gamma \dd\beta \int_{[-\pi,\pi]^2} \dd k_1 \dd k_2 \log [g(\beta) + h(\beta;k_1,k_2)] \\
= \int_{[-\pi,\pi]^2} \dd k_1 \dd k_2  \int_\gamma \dd\beta \log [g(\beta) + h(\beta;k_1,k_2)] = 0,
\end{multline}
so that $\sff(\beta)$ is indeed analytic in $\caN$.
\end{proof}

Next we establish a sufficient criterion for the logarithmic divergence of $\sff''(\beta)$. We assume here that the minimum of $h(\beta_{\rm c};k_1,k_2)$ is at $k_1 = k_2 = 0$ where this function is 0.

\begin{proposition}
\label{prop singularity}
Assume that there exists $\beta_{\rm c}$ such that
\[
g(\beta_{\rm c}) = 0, \quad g''(\beta_c) > 0.
\]
Further, we assume that there exists $c>0$ such that for all $k_1,k_2 \in [-\pi,\pi]$,
\[
h(\beta_{\rm c}; k_1,k_2) \geq c (k_1^2+k_2^2).
\]
Then $\sff$ is continuously differentiable at $\beta_{\rm c}$, but its second derivative diverges as $\log|\beta-\beta_{\rm c}|$ when $\beta$ approaches $\beta_{\rm c}$.
\end{proposition}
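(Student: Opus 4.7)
Plan of proof. The strategy is to isolate the singular small-wavenumber contribution to $\sff''$ and show that all other contributions are smooth near $\beta_{\rm c}$. First I will record a useful observation: evaluating Lemma~\ref{lem simply confusing} at $k_1=k_2=0$, where $h(\beta;0,0)=0$ identically, forces $g(\beta)\ge 0$ for all $\beta$. Since $g(\beta_{\rm c})=0$ and $g$ is real-analytic, $\beta_{\rm c}$ is a minimum of $g$, so $g'(\beta_{\rm c})=0$ and
\[
g(\beta)=\tfrac12 g''(\beta_{\rm c})(\beta-\beta_{\rm c})^2+O\bigl((\beta-\beta_{\rm c})^3\bigr).
\]
I then fix $\delta\in(0,\pi/2)$ and split $\sff(\beta)=-\log2-\frac{1}{8\pi^2}\bigl(I_{\rm reg}(\beta)+I_{\rm sing}(\beta)\bigr)$, where $I_{\rm sing}$ collects the integrand on $B_\delta=\{k_1^2+k_2^2\le\delta^2\}$ and $I_{\rm reg}$ on the complement in $[-\pi,\pi]^2$. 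On $B_\delta^c$ the hypothesis gives $h(\beta_{\rm c};k)\ge c\delta^2$, and by continuity $g(\beta)+h(\beta;k)$ stays bounded below by a positive constant uniformly on $B_\delta^c$ for $\beta$ in a complex neighbourhood of $\beta_{\rm c}$; the previous lemma then shows $I_{\rm reg}$ is analytic there and contributes only smooth terms.

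The heart of the argument is the analysis of $I_{\rm sing}$. For $\beta$ close to but distinct from $\beta_{\rm c}$ the integrand is smooth and I can differentiate twice inside the integral to obtain
\[
\partial_\beta^2\log(g+h)=\frac{g''+h''}{g+h}-\frac{(g'+h')^2}{(g+h)^2}.
\]
I pass to polar coordinates $(r,\theta)$ with $k_1=r\cos\theta$, $k_2=r\sin\theta$, and expand $h(\beta;k_1,k_2)=A(\beta;\theta)\,r^2+O(r^4)$; the hypothesis $h(\beta_{\rm c};k)\ge c(k_1^2+k_2^2)$ forces $A(\beta_{\rm c};\theta)\ge c$ uniformly in $\theta$. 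The singular contribution comes from the $g''(\beta_{\rm c})/(g+h)$ piece through the model computation
\[
\int_0^\delta\frac{r\,dr}{g(\beta)+A\,r^2}=\frac{1}{2A}\log\Bigl(1+\frac{A\delta^2}{g(\beta)}\Bigr)\sim-\frac{1}{A}\log|\beta-\beta_{\rm c}|,
\]
using $g(\beta)\sim\tfrac12 g''(\beta_{\rm c})(\beta-\beta_{\rm c})^2$. Integrating the angle and collecting the overall $-\frac{1}{8\pi^2}$ factor yields $\sff''(\beta)=C\log|\beta-\beta_{\rm c}|+O(1)$ with a nonzero constant $C$, namely the advertised logarithmic divergence.

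The main technical work will be showing that the remaining pieces stay bounded as $\beta\to\beta_{\rm c}$. The term $h''/(g+h)$ is harmless since $h''(\beta;k)=O(r^2)$, making $\int_{B_\delta}h''/(g+h)$ uniformly bounded. For the squared term, $g'(\beta)=O(\beta-\beta_{\rm c})$ and $h'(\beta;k)=O(r^2)$ give a numerator of size $O((\beta-\beta_{\rm c})^2+r^4)$ against a denominator $\gtrsim(g(\beta)+cr^2)^2\gtrsim((\beta-\beta_{\rm c})^2+r^2)^2$, and an explicit polar integration shows the contribution is $O(1)$. The $O(r^4)$ remainder in the expansion of $h$ and the $O((\beta-\beta_{\rm c})^3)$ remainder in $g$ are absorbed into the same scheme. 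Continuous differentiability of $\sff$ at $\beta_{\rm c}$ will follow by the same polar analysis applied to $\sff'(\beta)=-\frac{1}{8\pi^2}\int(g'+h')/(g+h)\,d^2k$: the only potentially dangerous contribution is $g'(\beta)\int_0^\delta r\,dr/(g+Ar^2)$, which is of order $g'(\beta)\log g(\beta)=O\bigl((\beta-\beta_{\rm c})\log|\beta-\beta_{\rm c}|\bigr)\to0$. I expect the main obstacle to be the careful bookkeeping of these remainders in polar coordinates so as to establish that $\log|\beta-\beta_{\rm c}|$ is the sole leading singularity.
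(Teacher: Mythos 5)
Your argument is correct and follows essentially the same route as the paper: differentiate under the integral, pass to polar coordinates near $k=0$, identify $g''(\beta_{\rm c})\int \dd k_1\dd k_2/(g+h)$ as the sole logarithmically divergent contribution via the model integral $\int_0^\delta r\,\dd r/(a^2+r^2)\sim|\log a|$, and check that the $h''/(g+h)$ and squared terms stay bounded. Your explicit derivation of $g'(\beta_{\rm c})=0$ from Lemma~\ref{lem simply confusing} (so that $g(\beta)\asymp(\beta-\beta_{\rm c})^2$) is a worthwhile addition, since the paper's key bound \eqref{bound for ratio} relies on this fact without stating it.
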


It is not hard to verify that the second condition holds true when $\sinh(2\beta_{\rm c} J_i) + 2 \sinh(2\beta_{\rm c} J_3) > 0$ for $i = 1,2$.

\begin{proof}
We already know that $\sff(\beta)$ is concave and therefore continuous. For $\beta \neq \beta_{\rm c}$, we have
\be
\label{derivative of f}
\sff'(\beta) = -\frac1{8\pi^2} \int \dd k_1 \dd k_2 \frac{g'(\beta) + \frac\partial{\partial\beta} h(\beta;k_1,k_2)}{g(\beta) + h(\beta;k_1,k_2)}.
\ee
There exists a constant $C$ such that
\be
\label{bound for ratio}
\biggl| \frac{g'(\beta) + \frac\partial{\partial\beta} h(\beta;k_1,k_2)}{g(\beta) + h(\beta;k_1,k_2)} \biggr| \leq C \frac{|\beta-\beta_{\rm c}| + k_1^2 + k_2^2}{(\beta-\beta_{\rm c})^2 + c(k_1^2 + k_2^2)}.
\ee
As $a \to 0+$, we note that
\be
\label{log divergence}
\int_0^1 \frac{r \dd r}{a^2 + r^2} = \tfrac12 \log(a^2+1) - \log a \sim |\log a|.
\ee
Writing the integral \eqref{derivative of f} with polar coordinates around 0, and using \eqref{bound for ratio} and \eqref{log divergence}, we easily check that $\sff'$ is continuous at $\beta_{\rm c}$. For the second derivative we write
\bm
\label{second derivative of f}
\sff''(\beta) = -\frac{g''(\beta)}{8\pi^2} \int \frac{\dd k_1 \dd k_2}{g(\beta) + h(\beta;k_1,k_2)} - \frac1{8\pi^2} \int \dd k_1 \dd k_2 \frac{\frac{\partial^2}{\partial^2\beta} h(\beta;k_1,k_2)}{g(\beta) + h(\beta;k_1,k_2)} \\
+ \frac1{8\pi^2} \int \dd k_1 \dd k_2 \biggl( \frac{g'(\beta) + \frac\partial{\partial\beta} h(\beta;k_1,k_2)}{g(\beta) + h(\beta;k_1,k_2)} \biggr)^2.
\end{multline}
For the first term we use the bounds $g(\beta) < g''(\beta_{\rm c}) (\beta-\beta_{\rm c})^2$ and $h(\beta;k_1,k_2) < {\rm const} (k_1^2 + k_2^2)$; using polar coordinates and \eqref{log divergence}, this term diverges as $\log|\beta-\beta_c|$ when $\beta \to \beta_{\rm c}$. The second term is easily seen to be bounded uniformly in $\beta \to \beta_{\rm c}$ using the second condition of the proposition and $|\frac{\partial^2}{\partial^2\beta}  h(\beta;k_1,k_2)| < {\rm const} (k_1^2 + k_2^2)$. For the third term we use \eqref{bound for ratio}, and $|\frac{\partial}{\partial\beta}  h(\beta;k_1,k_2)| < {\rm const} (k_1^2 + k_2^2)$. Using polar coordinates, and neglecting constants, we get an upper bound of the form
\bm
\int_0^1 \biggl( \frac{|\beta-\beta_{\rm c}| + r^2}{(\beta-\beta_{\rm c})^2 + r^2} \biggr)^2 r \dd r \\
\leq 4(\beta-\beta_{\rm c})^2 \int_0^{|\beta-\beta_{\rm c}|^{1/2}} \frac{r \dd r}{((\beta-\beta_{\rm c})^2 + r^2)^2} + \int_{|\beta-\beta_{\rm c}|^{1/2}}^1 \biggl( \frac{|\beta-\beta_{\rm c}| + r^2}{(\beta-\beta_{\rm c})^2 + r^2} \biggr)^2 r \dd r.
\end{multline}
The first integral is easily seen to behave as $|\beta-\beta_{\rm c}|^{-1}$ and it is controlled by the prefactor. The integrand of the second integral is a decreasing function of $r$; we get an upper bound by replacing $r$ with $|\beta-\beta_{\rm c}|^{1/2}$ which shows that it is bounded.

We have now verified that the only divergent term in \eqref{second derivative of f} is the first one, and the divergence is logarithmic indeed.
\end{proof}

\subsection{Case $J_1=J_2$}
\label{sec special case}

We consider the special case where two coupling constants are identical. By using symmetries (spin flips along alternate rows or columns) we can assume without loss of generality that $J_1=J_2 \geq 0$. Further, by rescaling $\beta$, we can take $J_1=J_2=1$.

\begin{theorem}
\label{thm special case}
Let $J_1=J_2=1$.
\begin{itemize}
\item[(a)] If $J_3 > -1$, there is a unique $\beta_{\rm c}$ such that $\sff(\beta)$ is analytic in $\bbR_+ \setminus \{ \beta_{\rm c} \}$ and $\sff''(\beta)$ has a logarithmic divergence at $\beta_{\rm c}$.
\item[(b)] If $J_3 \leq -1$, $\sff(\beta)$ is analytic in $\bbR_+$.
\end{itemize}
\end{theorem}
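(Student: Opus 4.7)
Proof plan.

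The plan rests on an explicit sum-of-squares identity for $g + h$ specific to the case $J_1 = J_2 = 1$. Setting $\phi = (k_1+k_2)/2$, $\psi = (k_1-k_2)/2$, and $a = \sinh(2\beta)$, and applying the relations $\cosh^2(2\beta) - a^2 = 1$, $\cosh(2\beta J_3) \pm \sinh(2\beta J_3) = e^{\pm 2\beta J_3}$, $\cos k_1 + \cos k_2 = 2\cos\phi\cos\psi$, and $\cos(k_1+k_2) = 1 - 2\sin^2\phi$, direct algebraic manipulation gives the identity
\[
g + h = e^{2\beta J_3}\sin^2\phi + \bigl(a e^{\beta J_3} - e^{-\beta J_3}\cos\phi\cos\psi\bigr)^2 + e^{-2\beta J_3}\cos^2\phi\sin^2\psi.
\]
All three summands are manifestly nonnegative, which in particular supplies Lemma~\ref{lem simply confusing} in the present case. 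They vanish simultaneously iff $\sin\phi = \sin\psi = 0$ and $\cos\phi\cos\psi = 1$ (the sign is forced by positivity of $a e^{\beta J_3}$), pinning the zero to $(k_1, k_2) = (0, 0)$ on the torus and reducing to the scalar equation $v(\beta) := \sinh(2\beta)\, e^{2\beta J_3} = 1$.

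The next step is to analyze $v$. From $v'(\beta) = 2 e^{2\beta J_3}\cosh(2\beta)\bigl(1 + J_3\tanh(2\beta)\bigr)$ and $\tanh(2\beta) \in [0, 1)$ for $\beta \geq 0$, the factor $1 + J_3\tanh(2\beta)$ is strictly positive on $(0, \infty)$ whenever $J_3 > -1$; in that case $v$ is strictly increasing with $v(0) = 0$ and $v(\beta) \sim \tfrac12 e^{2\beta(1+J_3)} \to \infty$, producing a unique $\beta_{\rm c} > 0$ solving $v(\beta_{\rm c}) = 1$. For $J_3 \leq -1$, the crude estimate $v(\beta) \leq \tfrac12 e^{2\beta(1+J_3)} \leq \tfrac12$ excludes any zero of $g + h$ on $(0, \infty)$. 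Combined with the analyticity lemma of Section~\ref{sec log singularity}, this delivers part (b) and the analyticity statement in part (a).

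For the logarithmic divergence of $\sff''$ at $\beta_{\rm c}$, I will verify the three hypotheses of Proposition~\ref{prop singularity}. Writing $g = X^2$ with $X(\beta) = \sinh(2\beta)\, e^{\beta J_3} - e^{-\beta J_3}$ and using $X(\beta_{\rm c}) = 0$, I get $g''(\beta_{\rm c}) = 2 X'(\beta_{\rm c})^2$; a short calculation yields $X'(\beta_{\rm c}) = 2 e^{\beta_{\rm c} J_3}\bigl(\cosh(2\beta_{\rm c}) + J_3\sinh(2\beta_{\rm c})\bigr)$, which is nonzero by the same $J_3 > -1$ estimate used for $v'$. For the quadratic lower bound on $h(\beta_{\rm c};\cdot)$, the equality $g(\beta_{\rm c}) = 0$ makes $h(\beta_{\rm c};\cdot)$ coincide with the above three-square expression, whose leading Taylor expansion at the origin equals $e^{2\beta_{\rm c} J_3}\phi^2 + e^{-2\beta_{\rm c} J_3}\psi^2$, a positive-definite quadratic form in $(k_1, k_2)$; the global bound then follows by compactness since $h(\beta_{\rm c};\cdot) > 0$ away from the origin.

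The main obstacle is guessing the sum-of-squares identity itself; once it is in hand the remaining steps are routine calculus. A priori one might fear that, for $J_3 < 0$, the minimum of $g + h$ drifts away from $(0, 0)$ — since $h$ alone is no longer a sum of nonnegative terms in $(k_1, k_2)$ — which would both complicate the location of $\beta_{\rm c}$ and obstruct the third hypothesis of Proposition~\ref{prop singularity}. The decomposition dispatches this worry cleanly.
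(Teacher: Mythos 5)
Your proposal is correct --- I verified the sum-of-squares identity (expanding the middle square and using $\cos^2\psi+\sin^2\psi=1$ reduces the right side to $a^2e^{2\beta J_3}-2a\cos\phi\cos\psi+e^{-2\beta J_3}+2\sinh(2\beta J_3)\sin^2\phi$, which is exactly $g+h$) --- and it follows the same skeleton as the paper (locate the zeros of $g+h$, reduce to a scalar equation in $\beta$, verify the hypotheses of Proposition \ref{prop singularity}), but the key computational device is genuinely different. The paper minimises $h$ alone in the variables $\frac{k_1\pm k_2}{2}$, introducing $\alpha=-\sinh(2\beta J_3)/\sinh(2\beta)$ and ruling out the case $\alpha\geq\frac12$ by a separate contradiction argument, and then factors $g(\beta)=\cosh^2(2\beta)\cosh(2\beta J_3)\,a^2(\beta)$ so that the critical equation becomes the Cimasoni--Duminil-Copin--Li equation $a(\beta_{\rm c})=0$, solved by inverting the decreasing function $j(t)$ of $t=\tanh\beta$. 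Your single identity replaces both steps: it pins the zero to the origin for \emph{all} signs of $J_3$ at once (so you need not defer the case $J_3\geq0$ to \cite{CDC}), it yields Lemma \ref{lem simply confusing} for $J_1=J_2$ as a byproduct exactly as the paper anticipates, and it replaces $a(\beta_{\rm c})=0$ by the equivalent but simpler equation $\sinh(2\beta_{\rm c})\e{2\beta_{\rm c}J_3}=1$, whose monotonicity analysis (and the computation $g''(\beta_{\rm c})=2X'(\beta_{\rm c})^2>0$) is shorter than the paper's computation of $j'(t)$ and $a'(\beta_{\rm c})$. What you lose is the explicit contact with the formula \eqref{CDCL}, which the paper wants to exhibit for its own sake; if you care about that point you should add the one-line check that $\sinh(2\beta)\e{2\beta J_3}=1$ is equivalent to $a(\beta)=0$ (both characterise $g(\beta)=0$, since $g=X^2$ with $X=\sinh(2\beta)\e{\beta J_3}-\e{-\beta J_3}$). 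A sanity check confirming your equation: for $J_3=1$ it gives $\e{2\beta_{\rm c}}=\sqrt3$ up to the identity $\sinh(\tfrac12\log3)=1/\sqrt3$, i.e.\ $\beta_{\rm c}=\tfrac14\log3$, matching Fig.~\ref{fig plots}.
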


The theorem is illustrated with the phase diagram of Fig.\ \ref{fig phd}.

\begin{figure}[htb]\center
\includegraphics[width=9cm]{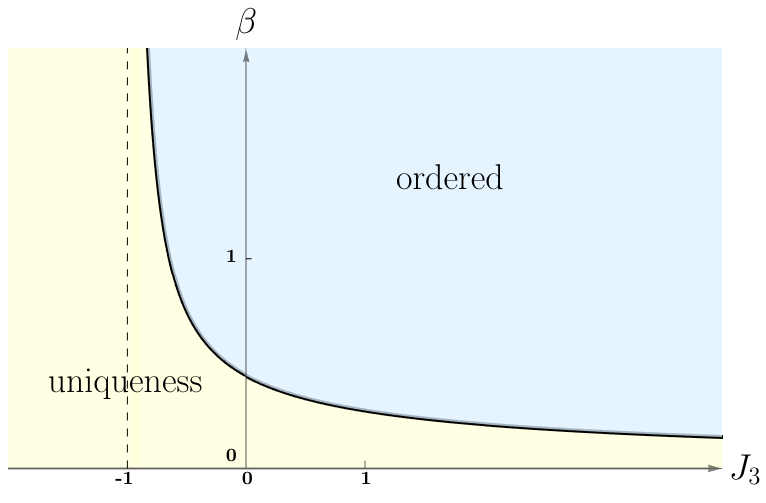}
\caption{Phase diagram with $J_1=J_2=1$. The free energy is proved to lack analyticity at the line that separates the ``ordered" and ``uniqueness" phases. The separation line is the inverse critical temperature $\beta_{\rm c} = \beta_{\rm c}(J_3)$; it is solution of the equation $\tanh\beta_{\rm c} = j^{-1}(J_3)$ with $j$ defined in Eq.\ \eqref{eq J_3}. For $J_3 \geq 0$, the article \cite{CDC} proves the existence of a unique infinite-volume Gibbs state for $\beta < \beta_{\rm c}$, and of several distinct Gibbs states for $\beta > \beta_{\rm c}$. For $J_3<0$, uniqueness is only proved for small $\beta$, and the existence of multiple Gibbs states is only proved for large $\beta$ (using the Pirogov-Sinai theory, see e.g.\ \cite{FV}).}
\label{fig phd}
\end{figure}

It helps to bring in the Cimasoni--Duminil-Copin--Li formula for the critical density, that was established for two-periodic planar lattices with nonnegative coupling constants \cite{Li, CDC}. Its general formulation involves sums over even graphs in the periodised cell that generates the lattice (see \cite[Theorem 1.1]{CDC}). In the present situation, this equation is
\be
\label{CDCL}
a(\beta) = 0
\ee
where the function $a(\beta)$ is defined as
\be
a(\beta) = 1 + \tanh^2\beta \tanh(\beta J_3) - 2\tanh\beta - \tanh(\beta J_3) - \tanh^2 \beta - 2\tanh\beta \tanh(\beta J_3).
\ee
In order to make the connection to the free energy \eqref{free energy 2}, we remark that
\be
g(\beta) = \cosh^2(2\beta) \cosh(2\beta J_3) a^2(\beta).
\ee
Then $g(\beta)$ vanishes precisely when $a(\beta)$ does; indeed, $h(\beta;k_1,k_2)$ is nonnegative when the coupling constants are nonnegative, and its minimum is 0. Proposition \ref{prop singularity} applies, establishing the singularity of the second derivative of the free energy.

We check in the proof below that the Cimasoni--Duminil-Copin--Li formula \eqref{CDCL} holds whenever $J_1 = J_2 \geq 0$, and $J_3 \in \bbR$ is allowed to be negative. One can also check that it {\it does not hold} if $J_1=J_2$ change signs; indeed, the free energy is the same due to symmetries (spin flips on a sublattice), but Eq.\ \eqref{CDCL} is different and has different solutions.

In addition to the non-analyticity of the free energy, Cimasoni and Duminil-Copin prove that the phase transition involves a change of the number of infinite-volume Gibbs states: There is just one for $\beta \leq \beta_{\rm c}$ and more than one for $\beta > \beta_{\rm c}$. The proof relies on the GKS and FKG correlation inequalities, which hold for nonnegative coupling constants only. It would be interesting to extend this to the case of coupling constants with arbitrary signs.

\begin{proof}[Proof of Theorem \ref{thm special case}]
When $J_3 \geq 0$ the theorem is a special case of \cite{CDC}, so we assume now that $J_3 \leq 0$ (even though the proof applies to positive $J_3$ with minor changes). We check that there exists a unique $\beta_{\rm c}$ that satisfies the conditions of Proposition \ref{prop singularity}.

We first check that $g(\beta) + h(\beta;k_1,k_2)$ can reach 0 only when $k_1=k_2=0$. Let $\alpha = -\frac{\sinh(2\beta J_3)}{\sinh(2\beta)}$. Using trigonometric identities, we have
\be
h(\beta;k_1,k_2) = 2\sinh(2\beta) + 2\sinh(2\beta J_3) + 2\sinh(2\beta) \bigl[ \alpha \cos^2(\tfrac{k_1+k_2}2) - \cos(\tfrac{k_1+k_2}2) \cos(\tfrac{k_1-k_2}2) \bigr].
\ee
We can minimise separately on the variables $\tfrac{k_1+k_2}2$ and $\tfrac{k_1-k_2}2$. There exists a minimiser satisfying  $\cos(\tfrac{k_1+k_2}2) \geq 0$ and $\cos(\tfrac{k_1-k_2}2)=1$. The minimum is then easily found, namely
\be
\min_{k_1,k_2} h(\beta;k_1,k_2) = \begin{cases} 0 & \text{if } \alpha \leq \frac12, \\ 2\sinh(2\beta) (1-\alpha - \frac1{4\alpha}) & \text{if } \alpha \geq \frac12. \end{cases}
\ee
The first case corresponds to $k_1=k_2=0$.  Suppose that $\alpha \geq \frac12$ and that $g(\beta) + \min h(\beta;k_1,k_2) = 0$. This is equivalent to
\be
(1 + \sinh^2(2\beta)) \sqrt{1+\alpha^2 \sinh^2(2\beta)} - (1+\sinh^2(2\beta)) \alpha \sinh(2\beta) - \frac{\sinh^2(2\beta)}{4\alpha} = 0.
\ee
The solution is $\alpha = \frac{\sinh(2\beta)}{2 \sqrt{1+\sinh^2(2\beta)}} < \frac12$; this contradicts the assumption that $\alpha \geq \frac12$.
This proves that, when $J_1=J_2$ and with arbitrary $J_3 \in \bbR$, the condition for $\beta_{\rm c}$ is $g(\beta_{\rm c}) = 0$, which is equivalent to the Cimasoni--Duminil-Copin--Li equation $a(\beta_{\rm c})=0$.

Instead of looking for $\beta_{\rm c}$ as function of $J_3$, it is more convenient to look for $J_3$ as function of $t=\tanh\beta$. The equation is then
\be
\label{eq J_3}
J_3 = \frac{\artanh \tfrac{1-2t-t^2}{1+2t-t^2}}{\artanh t} \equiv j(t).
\ee
The derivative of the function $j(t)$ is
\be
j'(t) = -\frac{(1+t^2) \artanh t + 2t \; \artanh \tfrac{1-2t-t^2}{1+2t-t^2} }{2t (1-t^2) \artanh^2 t}.
\ee
It is not hard to check that $\tfrac{1-2t-t^2}{1+2t-t^2} \geq -t$; it follows that the numerator above is positive so that $j'(t) < 0$. Further, $j(t)$ goes to $+\infty$ as $t\to0+$, and goes to $-1$ as $t\to1-$. Then $j^{-1}$ exists as a function $(-1,\infty) \to \bbR_+$; it follows that Eq.\ \eqref{eq J_3} has a unique solution when $J_3 > -1$ and no solutions otherwise. We also see that $\beta_{\rm c} \to 0$ as $J_3 \to \infty$, and $\beta_{\rm c} \to \infty$ as $J_3 \to -1$.

Finally we check that $g''(\beta_{\rm c}) > 0$. It is enough to check that $a'(\beta_{\rm c}) \neq 0$. We have
\be
a'(\beta) = -2(1-t^2) \bigl[ 1 + t  + (1-t) \tanh(\beta J_3) \bigr] - J_3 (1+2t-t^2) \bigl[ 1 - \tanh^2(\beta J_3) \bigr].
\ee
At $\beta = \beta_{\rm c}$ we have $\tanh(\beta_{\rm c} J_3) = \frac{1-2t-t^2}{1+2t-t^2}$, where $t = \tanh\beta_{\rm c}$. It is then possible to write $a'(\beta_{\rm c})$ as
\be
a'(\beta_{\rm c}) = -\frac{4(1-t^2) (1+t^2 + 2t J_3)}{1+2t-t^2}.
\ee
This is clearly not 0 since $t<1$ and $J_3 > -1$.

The condition on $h$ in Proposition \ref{prop singularity} clearly holds.
\end{proof}

\section{The Kac-Ward identity}
\label{sec Kac-Ward}

We rely on the extension of the Kac-Ward identity to ``faithful projections" of non-planar graphs. It was proposed by Cimasoni \cite{Cim} and used in \cite{KLM, AW}. In order to accommodate negative weights we need two faithful projections for $\bbT_{L,M}$ with edges between nearest-neighbours. The graphs are $G_1$ and $G_2$ and they are illustrated in Fig.\ \ref{fig graph}. Here is a full description of the left graph:
\begin{itemize}
\item The vertices are $(i,j)$ with $1 \leq i \leq L$ and $1 \leq j \leq M$.
\item There are edges represented by straight lines between $(i,j)$ and $(i+1,j)$ for $1 \leq i \leq L-1$, $1 \leq j \leq M$; between $(i,j)$ and $(i,j+1)$ for $1 \leq i \leq L$, $1 \leq j \leq M-1$; and between $(i,j)$ and $(i+1,j+1)$ for $1 \leq i \leq L-1$, $1 \leq j \leq M-1$.
\item There are edges represented by ``handles" (continuous curves with winding number $-1$) between $(L,j)$ and $(1,j)$ for $1 \leq j \leq M$; between $(L,j)$ and $(1,j+1)$ for $1 \leq j \leq M-1$; between $(i,M)$ and $(i,1)$ for $1 \leq i \leq L$; between $(i,M)$ and $(i+1,1)$ for $1 \leq i \leq L-1$.
\item And there is a self-crossing handle between $(L,M)$ and $(1,1)$ whose winding number is $-2$.
\item The handles are drawn so that handles starting at $(i,M)$ only cross the handles starting at $(L,j)$ (and they cross them exactly once); the self-crossing handle belongs to both groups.
\end{itemize} 
The second graph is similar, except that the oblique handle no longer self-crosses but the other horizontal handles all self-cross.

\begin{figure}[htb]\center
\includegraphics[width=65mm]{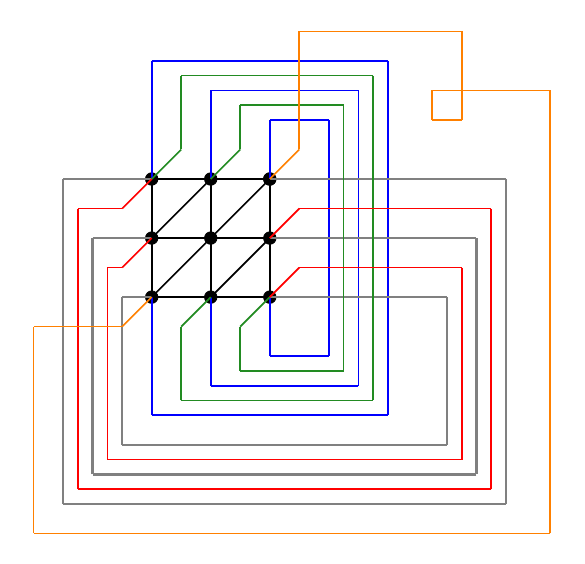}\hspace{12mm}\includegraphics[width=65mm]{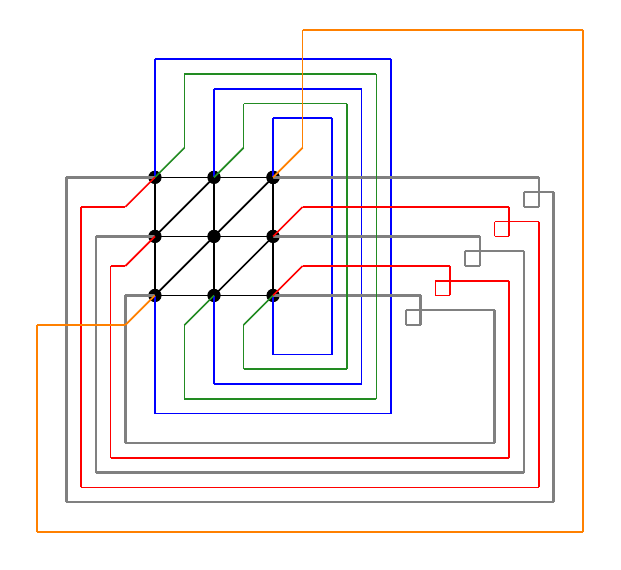}
\caption{Two faithful projections of the graph $(\bbT_{3,3},\caE_{3,3})$. The handles cross at non-vertex locations; some handles cross themselves. The matrix $K^{\scriptscriptstyle (1)}$ is defined on the left graph; the matrix $K^{\scriptscriptstyle (2)}$ is defined on the right graph.}
\label{fig graph}
\end{figure}

The Kac-Ward identity involves matrices indexed by directed edges. We denote $\vec\caE_{L,M}$ the edges of $\caE_{L,M}$ with direction. The coupling constants defined in Eq.\ \eqref{def couplings} can be extended to directed edges by assigning the same value $J_e$ to both directions of the same edge; then we let $W$ to be the diagonal matrix whose element $W_{e,e}$ is equal to $\tanh J_e$. We now introduce the Kac-Ward matrix $K^{\scriptscriptstyle (1)}$ by
\be\label{K1}
K_{e,e'}^{\scriptscriptstyle (1)} = 1_{e \, \triangleright \, e'} \, \e{\frac\ii2 \measuredangle_1(e,e') + \frac\ii2 \measuredangle_1(e)}, \quad e,e' \in \caE_{L,M}.
\ee
Here $e \, \triangleright \, e'$ means that the endpoint of $e$ is equal to the starting point of $e'$ and also that $e'$ is not equal to the reverse of $e'$ (the matrix is not ``backtracking"). $\measuredangle_1(e,e'):\mathcal{E}_{L,M} \rightarrow(-\pi,\pi]$ is the angle between the end of $e$ and the start of $e'$ on the the faithful projection $G_1$; $\measuredangle_1(e):\mathcal{E}_{L,M} \rightarrow \mathbb{R}$ is the integrated angle along the planar curve that represents the edge $e$.

Following \cite{AW} we define an average over even subgraphs: If $f$ is a function on graphs, let
\be
\langle f \rangle_{L,M} = \frac1{\widetilde Z_{L,M}} \sum_{\Gamma \subset \caE_{L,M}: \partial \Gamma = \emptyset} f(\Gamma) w(\Gamma)
\ee
where the normalisation is $\widetilde Z_{L,M} = \sum_{\Gamma \in \caE_{L,M}: \partial \Gamma = \emptyset} w(\Gamma)$. The definition of the weight is $w(\Gamma) = \prod_{e \in \Gamma} \tanh J_e$. The boundary $\partial\Gamma$ of a graph is the set of vertices whose incidence number is odd; the sum in the right hand side is over even subgraphs. 
Notice that $\widetilde Z_{L,M}$ is always positive as can be seen from its relation to the Ising partition function, see \eqref{high T exp} below.

With these definition, we have the remarkable Kac-Ward identity \cite[Theorem 5.1]{AW}:
\be
\label{Kac-Ward}
\sqrt{\det(1 - K^{\scriptscriptstyle (1)}W)} = \widetilde Z_{L,M}  \big\langle (-1)^{n^{\scriptscriptstyle (1)}_0(\Gamma)} \big\rangle_{L,M}.
\ee
Here $n^{\scriptscriptstyle (1)}_0(\Gamma)$ is the total number of crossings between all edges of $\Gamma$ when the graph is projected on $G_1$.   

It is worth noting that the right side of \eqref{Kac-Ward} is a multinomial in $(W_{e,e})$, something that is not apparent in the left side --- there are remarkable cancellations indeed. This allows \cite{AW} to prove the identity for small $(W_{e,e})$; the extension to larger values is automatic. The determinant cannot be negative and the sign of the square root cannot change.

We define the matrix $K^{\scriptscriptstyle (2)}$ as in \eqref{K1} but  $\measuredangle_2(e,e')$ and $ \measuredangle_2(e)$ are the  corresponding angles on the the faithfull projection $G_2$. Analogously, we define $n^{\scriptscriptstyle (2)}_0(\Gamma)$ for this projection. 

The connection with the Ising model is through the high-temperature expansion, see e.g.\ \cite[Section 3.7.3]{FV}. The partition function \eqref{def part fct} is equal to
\be
\label{high T exp}
Z_{L,M}(J_1,J_2,J_3) = 2^{LM} \biggl( \prod_{e \in \caE_{L,M}} \cosh J_e \biggr) \sum_{\Gamma \subset \caE_{L,M}: \partial \Gamma = \emptyset} w(\Gamma) = 2^{LM} \biggl( \prod_{e \in \caE_{L,M}} \cosh J_e \biggr) \widetilde Z_{L,M}.
\ee

The strategy of Aizenman and Warzel \cite{AW} is to prove that $\langle (-1)^{n^{\scriptscriptstyle (1)}_0(\Gamma)} \rangle_{L,M} \to 1$ as $L,M \to \infty$. This can be done when the coupling constants are positive, and small enough so the temperature is higher than the 2D critical temperature. (Then duality is used to get the formula for low temperatures.) The presence of negative coupling constants necessitates a different approach. We first show in Lemma \ref{lem Georgios} that a combination of the two faithful projections gives the partition function, up to a correction. We then show in Lemma \ref{lem Perron Frobenius} that this correction vanishes in the limit $L\to\infty$, for fixed $M$. Denote by $n_{\rm h}(\Gamma)$ the number of horizontal handles of the subgraph $\Gamma$, that is, the number of handles in $\Gamma$ that connect sites of the form $(L,i)$ with sites $(1,j)$. Note that the total number of horizontal handles of $\mathcal{E}_{L,M}$ is $2M$.

\begin{lemma}
\label{lem Georgios}
We have
\[
\sqrt{\det(1-K^{\scriptscriptstyle (1)}W)} + \sqrt{\det(1-K^{\scriptscriptstyle (2)}W)} = 2 \widetilde Z_{L,M} \Bigl( 1 - \big\langle 1_{n_{\rm h}(\Gamma) \, \rm{odd}} \big\rangle_{L,M} \Bigr).
\]
\end{lemma}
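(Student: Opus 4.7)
The plan is to apply the Kac-Ward identity \eqref{Kac-Ward} to both $K\scr{1}$ and $K\scr{2}$, add the two equalities, and reduce the lemma to a pointwise combinatorial identity on even subgraphs. Using $1 - \langle 1_{n_{\rm h}\,{\rm odd}}\rangle_{L,M} = \langle 1_{n_{\rm h}\,{\rm even}}\rangle_{L,M}$, the task becomes to verify
$$(-1)^{n_0\scr{1}(\Gamma)} + (-1)^{n_0\scr{2}(\Gamma)} = 2 \cdot 1_{n_{\rm h}(\Gamma)\,{\rm even}}$$
for every $\Gamma \subset \caE_{L,M}$ with $\partial\Gamma = \emptyset$; multiplying by $w(\Gamma)$ and summing over such $\Gamma$ then yields the statement of the lemma.

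To establish this identity, I would decompose $n_0\scr{j}(\Gamma)$ into self-crossings of the handles of $\Gamma$ plus pairwise crossings between distinct handles of $\Gamma$, noting that straight interior edges contribute no crossings in either faithful projection. Let $n_{\rm v}(\Gamma)$ denote the number of vertical handles in $\Gamma$ and let $1_{\rm o}(\Gamma)$ be the indicator that the corner oblique handle lies in $\Gamma$; recall that the oblique is simultaneously a horizontal and a vertical handle. The description of $G_1$ gives the oblique as the sole self-crossing handle and states that each vertical handle crosses each horizontal handle exactly once. The number of distinct ordered pairs of (horizontal, vertical) handles in $\Gamma$ is $n_{\rm h}(\Gamma) n_{\rm v}(\Gamma) - 1_{\rm o}(\Gamma)$, the subtraction removing the invalid pairing of the oblique with itself. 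Adding the self-contribution $1_{\rm o}(\Gamma)$ from the oblique gives
$$n_0\scr{1}(\Gamma) = 1_{\rm o}(\Gamma) + \bigl( n_{\rm h}(\Gamma) n_{\rm v}(\Gamma) - 1_{\rm o}(\Gamma) \bigr) = n_{\rm h}(\Gamma) \, n_{\rm v}(\Gamma).$$
The projection $G_2$ differs only in that the self-crossing is transferred from the oblique handle to every other horizontal handle while the pair-crossing pattern is preserved, so
$$n_0\scr{2}(\Gamma) = \bigl( n_{\rm h}(\Gamma) - 1_{\rm o}(\Gamma) \bigr) + \bigl( n_{\rm h}(\Gamma) n_{\rm v}(\Gamma) - 1_{\rm o}(\Gamma) \bigr) \equiv n_{\rm h}(\Gamma) \bigl( 1 + n_{\rm v}(\Gamma) \bigr) \pmod 2.$$

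The pointwise identity now follows immediately: when $n_{\rm h}(\Gamma)$ is even, both $n_0\scr{j}(\Gamma)$ are even and the sum equals $2$; when $n_{\rm h}(\Gamma)$ is odd, $n_0\scr{1}(\Gamma) \equiv n_{\rm v}(\Gamma)$ and $n_0\scr{2}(\Gamma) \equiv 1 + n_{\rm v}(\Gamma)$ have opposite parity and the sum vanishes.

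The main obstacle is to justify carefully that the pairwise crossings in $G_2$ match those in $G_1$. This reduces to the topological observation that redrawing a curve so as to add (or remove) a single self-crossing can be realised by a local twist that does not change the number of intersections with any other curve; a direct inspection of the two faithful projections in Figure \ref{fig graph} is needed to confirm that the projections constructed in Section \ref{sec Kac-Ward} are indeed related in this way. Keeping track of the dual role of the corner oblique handle, which belongs to both the horizontal and vertical families, is the other bookkeeping subtlety.
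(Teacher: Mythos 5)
Your proof is correct and follows essentially the same route as the paper: both apply the Kac--Ward identity \eqref{Kac-Ward} to the two faithful projections and reduce the lemma to the pointwise parity identity $(-1)^{n_0\scr{1}(\Gamma)}+(-1)^{n_0\scr{2}(\Gamma)}=2\cdot 1_{n_{\rm h}(\Gamma)\,{\rm even}}$, proved by counting handle crossings and exploiting the dual role of the corner handle. Your explicit counts $n_0\scr{1}=n_{\rm h}n_{\rm v}$ and $n_0\scr{2}\equiv n_{\rm h}(1+n_{\rm v}) \pmod 2$ are just a slightly more detailed version of the paper's indicator identities (the paper excludes the corner handle from the vertical count, which is only a bookkeeping difference), and the crossing pattern of $G_2$ you worry about verifying is simply part of its construction.
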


\begin{proof}
From Eq.\ \eqref{Kac-Ward}, we have
\begin{equation}\label{2dets}
    \sqrt{\det(1-K^{\scriptscriptstyle (1)}W)} + \sqrt{\det(1-K^{\scriptscriptstyle (2)}W)}=\widetilde Z_{L,M}  \big\langle (-1)^{n^{\scriptscriptstyle (1)}_0(\Gamma)}+(-1)^{n^{\scriptscriptstyle (2)}_0(\Gamma)}  \big\rangle_{L,M}
\end{equation}

Let $n_{\rm v}(\Gamma)$ be the number of handles in $\Gamma$ that connect sites of the form $(i,M)$ with sites $(j,1)$ (excluding the handle between $(L,M)$ and $(1,1)$) and let $n_{\rm hv}(\Gamma) = 0,1$ be the indicator on whether the handle from $(L,M)$ and $(1,1)$ is present (notice the asymmetric definition of $n_{\rm v}$ and $n_{\rm h}$, as the oblique handle is included in $n_{\rm h}$ but not in $n_{\rm v}$). We have
\be
\begin{split}
 \quad & 1_{n^{\scriptscriptstyle (1)}_0(\Gamma) \, \rm {odd}} = 1_{n_{\rm h}(\Gamma) \, \rm{odd}} \; \bigl( 1_{n_{\rm hv}(\Gamma)=0} \; 1_{n_{\rm v}(\Gamma) \, {\rm odd}} + 1_{n_{\rm hv}(\Gamma)=1} \; 1_{n_{\rm v}(\Gamma) \, {\rm even}} \bigr);\\
 \quad & 1_{n^{\scriptscriptstyle (2)}_0(\Gamma) \, \rm {odd}} = 1_{n_{\rm h}(\Gamma) \, \rm{odd}} \; \bigl( 1_{n_{\rm hv}(\Gamma)=0} \; 1_{n_{\rm v}(\Gamma) \, {\rm even}} + 1_{n_{\rm hv}(\Gamma)=1} \; 1_{n_{\rm v}(\Gamma) \, {\rm odd}} \bigr).
\end{split}
\ee
It follows that
\be
1_{n^{\scriptscriptstyle (1)}_0(\Gamma) \, \rm {odd}} + 1_{n^{\scriptscriptstyle (2)}_0(\Gamma) \, \rm {odd}} = 1_{n_{\rm h}(\Gamma) \, \rm{odd}}.
\ee
By combining the above relation with \eqref{2dets}, using $(-1)^{n_0\scr{i}(\Gamma)} = 1 - 2 \cdot 1_{n_0\scr{i} (\Gamma) \, {\rm odd}}$, the lemma follows.
\end{proof}

\begin{lemma}
\label{lem Perron Frobenius}
For any $J_1, J_2, J_3 \in \bbR$, for any $M \in \bbN$, we have
\[
\lim_{L \to \infty} \big\langle 1_{n_{\rm h}(\Gamma) \, \rm{odd}} \big\rangle_{L,M} = 0.
\]
\end{lemma}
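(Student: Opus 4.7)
The plan is to express the average as a ratio of Ising partition functions and then conclude via Perron--Frobenius applied to the Ising transfer matrix. Let
\[
\widetilde Z^{\rm ap}_{L,M} = \sum_{\Gamma: \partial\Gamma = \emptyset} (-1)^{n_{\rm h}(\Gamma)} w(\Gamma).
\]
This is exactly the even-subgraph sum obtained by negating the weight $\tanh J_e$ on each of the $2M$ horizontal handles of $\caE_{L,M}$, so by the high-temperature expansion \eqref{high T exp} (and the identity $\cosh(-J)=\cosh J$) it equals $Z^{\rm ap}_{L,M}/(2^{LM}\prod_e\cosh J_e)$, where $Z^{\rm ap}_{L,M}$ is the Ising partition function on $\bbT_{L,M}$ with the couplings on those handles replaced by their negatives. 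Inclusion--exclusion on the parity of $n_{\rm h}(\Gamma)$ then gives
\[
\big\langle 1_{n_{\rm h}(\Gamma)\,\rm odd}\big\rangle_{L,M} = \frac{\widetilde Z_{L,M} - \widetilde Z^{\rm ap}_{L,M}}{2\,\widetilde Z_{L,M}} = \tfrac12\Bigl(1 - \frac{Z^{\rm ap}_{L,M}}{Z_{L,M}}\Bigr),
\]
so it suffices to prove $Z^{\rm ap}_{L,M}/Z_{L,M}\to 1$ as $L\to\infty$.

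For this I would introduce the $2^M\times 2^M$ Ising transfer matrix in the horizontal direction,
\[
(T_M)_{\sigma,\sigma'} = \exp\Bigl(\sum_{j=1}^M\bigl[J_1\sigma_j\sigma'_j + J_3\sigma_j\sigma'_{j+1} + J_2\sigma_j\sigma_{j+1}\bigr]\Bigr),
\]
with $j$ taken mod $M$. The column-by-column expansion of the Boltzmann weight gives $Z_{L,M}=\Tr(T_M^L)$; after identifying $\sigma_{L+1}=-\sigma_1$ in the trace, one obtains $Z^{\rm ap}_{L,M}=\Tr(D\,T_M^L)$, where $D\sigma=-\sigma$ is the global spin-flip operator. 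Since each Boltzmann factor is invariant under simultaneous flip of every spin in a column, $[D,T_M]=0$. Crucially, \emph{every entry of $T_M$ is strictly positive, regardless of the signs of $J_1,J_2,J_3$}; Perron--Frobenius therefore delivers a simple maximal eigenvalue $\lambda_1>0$ with strictly positive eigenvector $v_1$, and $|\lambda_i|<\lambda_1$ for every other eigenvalue. As $Dv_1$ is again an entrywise positive eigenvector for $\lambda_1$, uniqueness forces $Dv_1=v_1$. Simultaneously diagonalising $T_M$ and $D$ and writing $d_i\in\{\pm 1\}$ for the $D$-eigenvalues (so that $d_1=+1$), we obtain
\[
Z_{L,M} - Z^{\rm ap}_{L,M} = 2\sum_{i:\,d_i = -1}\lambda_i^L,
\]
whose modulus is bounded by $2^{M+1}(\max_{i\geq 2}|\lambda_i|)^L$, while $Z_{L,M}\geq \tfrac12\lambda_1^L$ for $L$ sufficiently large. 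The ratio therefore decays exponentially in $L$ and the lemma follows.

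The key ingredient, and the main obstacle, is the strict spectral gap $\max_{i\geq 2}|\lambda_i|<\lambda_1$ of $T_M$; Perron--Frobenius supplies it for every fixed $M$ precisely because $T_M$ is strictly positive, so the signs of the individual couplings $J_i$ play no role. This is exactly why the lemma is stated with the limits in the order $L\to\infty$ first, $M\to\infty$ second: in the regime of a phase transition the spectral gap of $T_M$ closes as $M\to\infty$ and the clean exponential decay above is lost.
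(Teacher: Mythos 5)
Your argument is correct and rests on the same key input as the paper --- the horizontal transfer matrix $T_M$ has strictly positive entries whatever the signs of $J_1,J_2,J_3$, so Perron--Frobenius gives a simple, spin-flip-invariant top eigenvector --- but you reach it by a genuinely different decomposition. The paper conditions on the set $\frah$ of horizontal handles present in $\Gamma$, rewrites each conditional weight as a product of $\tanh J$ factors times an odd-point correlation $\langle \prod_{x\in\supp\frah}\sigma_x\rangle$ on the open cylinder, and shows each such correlation vanishes as $L\to\infty$ because the Perron eigenvector is even under the global spin flip; this requires comparing the toroidal and cylindrical partition functions and summing over the finitely many handle sets. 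You instead resum over the parity of $n_{\rm h}(\Gamma)$ in one stroke: identifying $\sum_{\Gamma}(-1)^{n_{\rm h}(\Gamma)}w(\Gamma)$ with the horizontally antiperiodic partition function $Z^{\rm ap}_{L,M}$ via the high-temperature expansion gives the clean identity $\langle 1_{n_{\rm h}\,{\rm odd}}\rangle_{L,M}=\tfrac12\bigl(1-Z^{\rm ap}_{L,M}/Z_{L,M}\bigr)$, and the spin-flip symmetry enters through $[D,T_M]=0$ and $Dv_1=v_1$. Your route avoids the cylinder-versus-torus comparison and the bookkeeping of $\supp\frah$, and it yields an explicit exponential rate in $L$; the paper's route gives the finer statement that every conditional handle correlation separately tends to zero. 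One small repair is needed in your write-up: $T_M$ is not symmetric (the oblique $J_3$ couplings and the asymmetric placement of the $J_2$ terms break this), so it need not be diagonalisable and you cannot literally ``simultaneously diagonalise $T_M$ and $D$.'' Instead, decompose $\bbC^{2^M}$ into the two $T_M$-invariant eigenspaces of $D$, note that $v_1$ lies in the $+$ sector so the spectral radius $\rho_-$ of the restriction of $T_M$ to the $-$ sector satisfies $\rho_-<\lambda_1$, and bound the trace of the $L$-th power of that restriction by a polynomial in $L$ times $\rho_-^L$ (Jordan form). This does not affect your conclusion. Your closing remark about the order of limits is exactly right and matches the paper's reason for working at fixed $M$.
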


\begin{proof}
We condition on the horizontal handles (including possibly the self-crossing ones). We denote by $\frah$ the set of handles that connect sites in the leftmost and rightmost columns:
\be
\frah = \bigl\{ \{(1,j_1),(L,j_1')\}, \dots, \{(1,j_k),(L,j_k')\} \bigr\}.
\ee
Then we define the support $\supp_1 \frah$, resp.\  $\supp_L \frah$, to be the set of vertices of the form $(1,j_i)$, resp.\  $(L,j_i')$, that appear an odd number of times in $\frah$. Let $\supp\frah = \supp_1 \frah \cup \supp_L \frah$. We let $\tilde\caE_{L,M}$ be the set of edges of the cylinder (not the torus) $\{1,\dots,L\} \times \bbT_M$. With $1_\frah = 1_\frah(\Gamma)$ the indicator function that the random graph $\Gamma$ has set of handles $\frah$, we have
\be
\begin{split}
\big\langle 1_{n_{\rm h}(\Gamma) \, \rm{odd}} \big\rangle_{L,M} &= \sum_{|\frah| \; {\rm odd}} \langle 1_\frah \rangle_{L,M} \\
&= \sum_{|\frah| \; {\rm odd}} \biggl( \prod_{i=1}^k \tanh J_{(1,j_i),(L,j_i')} \biggr) \frac1{\widetilde Z_{L,M}} \sum_{\Gamma \subset \tilde\caE_{L,M} : \partial\Gamma = \supp\frah} w(\Gamma).
\end{split}
\ee
We now consider an Ising model on the cylinder $\{1,\dots,L\} \times \bbT_M$. We have
\be
 \frac1{\widetilde Z_{L,M}^{\rm cyl}} \sum_{\Gamma \subset \tilde\caE_{L,M} : \partial\Gamma = \supp\frah} w(\Gamma) = \Big\langle \prod_{x \in \supp\frah} \sigma_x \Big\rangle_{L,M}^{\rm cyl}.
 \ee
 Notice that the partition function $\widetilde Z_{L,M}^{\rm cyl}$ is almost equal to $\widetilde Z_{L,M}$; either ratio is less than $\e{2M(|J_1|+|J_3|)}$.
Next we introduce the transfer matrix $T_{\eta,\eta'}$ between column configurations $\eta,\eta' \in \{-1,+1\}^M$:
\be
T_{\eta,\eta'} = \exp\biggl\{ \sum_{i=1}^M \Bigl( J_1 \eta_i \eta_i' + J_2 \eta_i \eta_{i+1} + J_3 \eta_i \eta_{i+1}' \Bigr) \biggr\}.
\ee
Here we defined $\eta_{M+1} \equiv \eta_1$. The transfer matrix allows to write the Ising correlations above as
\be
\label{transfer matrix expression}
\Big\langle \prod_{x \in \supp\frah} \sigma_x \Big\rangle_{L,M}^{\rm cyl} = \frac1{T^L} \sum_{\eta,\eta'} \langle \eta | T^{L-1} | \eta' \rangle \Bigl( \prod_{x \in \supp_1 \frah} \eta_x \Bigr) \Bigl( \prod_{y \in \supp_L \frah} \eta_y' \Bigr) \e{J_2 \sum_{i=1}^M \eta_i' \eta_{i+1}'}.
\ee
The matrix elements of $T$ are positive; by the Perron-Frobenius theorem there exist vectors $|v\rangle, |w\rangle$ such that
\be
\lim_{L\to\infty} \frac{T^{L-1}}{\Tr T^L} = \e{-\lambda_{\rm max}} |v\rangle \langle w|.
\ee
Here $\lambda_{\rm max}$ is the largest eigenvalue of $T$ (it depends on $M$). The vectors $|v\rangle, |w\rangle$ can be decomposed in the basis $\{ |\eta\rangle \}$ of column configurations and their coefficients have the spin-flip symmetry. Taking the limit $L\to\infty$ in \eqref{transfer matrix expression} one gets 0. Indeed, the sum over $\eta$ is
\be
\sum_\eta \Bigl( \prod_{x \in \supp_1 \frah} \eta_x \Bigr) \langle \eta | v \rangle
\ee
which is zero since $\supp_1 \frah$ contains an odd number of vertices; the sum over $\eta'$ also gives zero.
\end{proof}

Next we seek to calculate the determinants of $1-K^{\scriptscriptstyle (1)}W$ and $1-K^{\scriptscriptstyle (2)}W$. For this we first make the matrices translation-invariant so we can use the Fourier transform. Let us define $\widetilde K^{\scriptscriptstyle (i)}$ $i=1,2$ to be as $K^{\scriptscriptstyle (i)}$ $i=1,2$ but omitting the respective integrated angle of the handles:
\be
\widetilde K_{e,e'}^{\scriptscriptstyle (i)} = 1_{e \, \triangleright e'} \e{\frac\ii2 \measuredangle_i(e,e')} \hspace{4mm} i=1,2.
\ee
Actually $\widetilde K_{e,e'}^{\scriptscriptstyle (1)}=\widetilde K_{e,e'}^{\scriptscriptstyle (2)}$ and we shall write $\widetilde K_{e,e'}$ for either $\widetilde K_{e,e'}^{\scriptscriptstyle (1)}$ or $\widetilde K_{e,e'}^{\scriptscriptstyle (2)}$.
Then we define modified diagonal matrices $\widetilde W_{e,e}^{\scriptscriptstyle (1)}$ and $\widetilde W_{e,e}^{\scriptscriptstyle (2)}$; matrix elements now depend on the direction of $e$:
\be
\widetilde W^{\scriptscriptstyle (1)}_{e,e} = \begin{cases} W_{e,e} \e{\ii \pi / L} & \text{if } e = \rightarrow, \\ W_{e,e} \e{-\ii \pi / L} & \text{if } e = \leftarrow, \\ W_{e,e} \e{\ii \pi / M} & \text{if } e = \uparrow, \\ W_{e,e} \e{-\ii \pi / M} & \text{if } e = \downarrow, \\ W_{e,e} \e{\ii \pi (\frac1L+\frac1M)} & \text{if } e = \nearrow, \\ W_{e,e} \e{-\ii \pi (\frac1L+\frac1M)} & \text{if } e = \swarrow.
\end{cases}
\qquad
\widetilde W^{\scriptscriptstyle (2)}_{e,e} = \begin{cases} W_{e,e} & \text{if $e = \rightarrow$ or $\leftarrow$}, \\ W_{e,e} \e{\ii \pi / M} & \text{if $e = \uparrow$ or $\nearrow$}, \\ W_{e,e} \e{-\ii \pi / M} & \text{if $e = \downarrow$ or $\swarrow$}.
\end{cases}
\ee

\begin{lemma}\label{transl.inv.}
We have
\[
\det(1-K^{\scriptscriptstyle (1)}W) = \det(1-\widetilde K \widetilde W^{\scriptscriptstyle (1)}), \qquad \det(1-K^{\scriptscriptstyle (2)}W) = \det(1-\widetilde K \widetilde W^{\scriptscriptstyle (2)}).
\]
\end{lemma}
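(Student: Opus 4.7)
The plan is to exhibit, for each $i = 1, 2$, a diagonal invertible matrix $U^{\scriptscriptstyle (i)}$ indexed by directed edges implementing a gauge transformation
\[
(U^{\scriptscriptstyle (i)})^{-1} K^{\scriptscriptstyle (i)} W \, U^{\scriptscriptstyle (i)} = \widetilde K \widetilde W^{\scriptscriptstyle (i)},
\]
so that each identity follows by similarity invariance of the determinant. The case $i = 1$ is spelled out below; the case $i = 2$ is parallel.

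Writing $\widetilde W^{\scriptscriptstyle (1)}_{e,e}/W_{e,e} = e^{\ii \phi(e)}$ and $U_{e,e} = e^{\ii \psi(e)}$ and comparing matrix elements (using that $U$, $W$ and $\widetilde W^{\scriptscriptstyle (1)}$ are diagonal), the gauge identity becomes
\[
\tfrac12 \measuredangle_1(e) + \psi(e') - \psi(e) \equiv \phi(e') \pmod{2\pi} \qquad \text{whenever } e \, \triangleright \, e'.
\]
With the ansatz $\psi(e) = \phi(e) - \tau(\mathrm{start}(e))$ for a function $\tau : \bbT_{L,M} \to \bbR/2\pi\bbZ$ and using $\mathrm{start}(e') = \mathrm{end}(e)$, this further reduces to
\[
\tau(\mathrm{end}(e)) - \tau(\mathrm{start}(e)) \equiv \omega(e) \pmod{2\pi}, \qquad \omega(e) := \tfrac12 \measuredangle_1(e) - \phi(e).
\]
Both $\measuredangle_1$ and $\phi$ are antisymmetric under edge reversal, so $\omega$ is a well-defined $\bbR/2\pi\bbZ$-valued $1$-cochain on oriented edges. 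The task is therefore to find a primitive $\tau$ for $\omega$ on vertices.

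Such a primitive exists exactly when the holonomy $\sum_k \omega(e_k)$ around every closed walk in $\caE_{L,M}$ lies in $2\pi\bbZ$. Since the cycle space of the torus graph is generated, modulo boundaries of triangular faces, by one horizontal and one vertical loop, two checks suffice. On an interior triangular face, $\measuredangle_1 \equiv 0$ and the three $\phi$-contributions cancel via the identity $\pi/L + \pi/M - \pi(1/L + 1/M) = 0$; on a triangular face adjacent to one or more handles, a direct case-by-case computation (using antisymmetry of $\measuredangle_1$ under reversal) shows that $\tfrac12 \measuredangle_1$ on each handle edge cancels exactly against the $\phi$-contributions from the remaining face edges. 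For the horizontal loop $(1,j) \to (2,j) \to \cdots \to (L,j) \to (1,j)$, the $L-1$ non-handle rightward edges each contribute $-\pi/L$ and the handle ($\measuredangle_1 = -2\pi$) contributes $-\pi - \pi/L$, summing to $-2\pi \in 2\pi\bbZ$; the vertical loop is analogous. This produces $\tau$, hence $\psi$ and $U$, and establishes the first identity.

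The case $i = 2$ runs in parallel: in $G_2$ the self-crossings are moved from the NE handle to each horizontal handle (so $\measuredangle_2 = -4\pi$ on a horizontal handle and $-2\pi$ on the NE handle), and the phases of $\widetilde W^{\scriptscriptstyle (2)}$ are correspondingly redistributed onto vertical and NE edges; the same local and global checks go through and produce holonomy in $2\pi\bbZ$. The main obstacle is the case-by-case bookkeeping of face sums near handles, especially around the corner face containing the self-crossing handle: the doubled winding $\pm 4\pi$ produces contributions that must precisely cancel against the surrounding $\phi$-phases, and one must be careful about the orientation of each handle curve. Structurally, this cancellation is built into the very definition of $\widetilde W^{\scriptscriptstyle (i)}$, whose phases are designed so that accumulation over a full loop reproduces the handle windings present in $K^{\scriptscriptstyle (i)}$.
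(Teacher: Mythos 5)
Your proof is correct, but it takes a genuinely different route from the paper's. The paper invokes the loop expansion of $\det(1-KW)$ from \cite[Theorem 3.2]{AW} and checks one global identity per directed loop: a loop containing $\ell$ handles (the self-crossing one counted twice) acquires $(-1)^\ell$ from the integrated angles in $K^{\scriptscriptstyle (i)}$, and the phases in $\widetilde W^{\scriptscriptstyle (i)}$ accumulate the same $(-1)^\ell$ because a closed loop's net horizontal and vertical displacements are multiples of $L$ and $M$ whose parities match the handle count. You instead construct an explicit invertible diagonal matrix conjugating $K^{\scriptscriptstyle (i)}W$ into $\widetilde K \widetilde W^{\scriptscriptstyle (i)}$ and conclude by similarity invariance; this reduces to a discrete Poincar\'e lemma for the $1$-cochain $\omega = \tfrac12\measuredangle_i - \phi$, verified on the triangular faces and the two non-contractible loops, which indeed generate the cycle space of a graph cellularly embedded in the torus. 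The two arguments encode the same cancellation --- gauge equivalence is precisely the statement that all closed-loop products of the two matrices agree --- but yours is self-contained (it does not need the loop expansion of the determinant, only $\det(U A U^{-1}) = \det A$), at the price of more local bookkeeping. The face computations you defer do go through: for $G_1$, the face $(L,j)\to(1,j)\to(1,j+1)\to(L,j)$ gives $\bigl(-\pi - \tfrac\pi L\bigr) + \bigl(-\tfrac\pi M\bigr) + \bigl(\pi + \tfrac\pi L + \tfrac\pi M\bigr) = 0$, and the corner face $(L,M)\to(1,M)\to(1,1)\to(L,M)$ containing the self-crossing handle gives $\bigl(-\pi-\tfrac\pi L\bigr)+\bigl(-\pi-\tfrac\pi M\bigr)+\bigl(2\pi+\tfrac\pi L+\tfrac\pi M\bigr)=0$; the analogous checks for $G_2$ (with windings $-2$ on all horizontal handles and $-1$ on the oblique corner handle) also close up modulo $2\pi$.
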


\begin{proof}
One can expand the determinants as products of directed loops as in \cite[Theorem 3.2]{AW}. Let $\gamma = (e_1,\dots,e_k)$ be a directed loop with $\ell$ handles (the self-crossing handle between $(L.M)$ and $(1,1)$ is counted twice). We have
\be
\prod_{i=1}^k K^{\scriptscriptstyle (1)}_{e_i,e_{i+1}} = (-1)^\ell \, \prod_{i=1}^k \widetilde K_{e_i,e_{i+1}}, \qquad \prod_{i=1}^k \widetilde W^{\scriptscriptstyle (1)}_{e_i,e_i} = (-1)^\ell \, \prod_{i=1}^k W_{e_i,e_i}.
\ee
Then each loop gives the same contribution in $\det(1-K^{\scriptscriptstyle (1)}W)$ and in $\det(1-\widetilde K \widetilde W^{\scriptscriptstyle (1)})$. The argument for $\det(1-K^{\scriptscriptstyle (2)}W)$ is the same, counting only vertical and oblique handles between sites $(i,M)$ and $(j,1)$, $1\leq i, j, \leq L$.
\end{proof}

\begin{lemma}\label{dets}
Let $\bbT_{L}^*=\frac{2\pi}L \bbT_L $ and recall that $\widetilde{\bbT}_L = \bbT_{L}^* + \frac\pi{L}$.
\begin{itemize}
\item[(a)] With $k_3 = k_1 + k_2$, we have
\begin{multline*}
\det(1-\widetilde K \widetilde W^{\scriptscriptstyle (1)})= \prod_{k_1 \in \widetilde{\bbT}_L} \prod_{k_2 \in \widetilde{\bbT}_M} \bigg[ \prod_{i=1}^3 \big( 1+\tanh^2 J_i \big)+8\prod_{i=1}^3\tanh J_i \\
    -2\sum_{i=1}^{3}\tanh J_i \big(1-\tanh^2 J_{i+1} \big) \big(1-\tanh^2 J_{i+2} \big) \cos k_i \bigg].
\end{multline*}
\item[(b)] Again with $k_3 = k_1 + k_2$, we have
\begin{multline*}
    \det(1-\widetilde K \widetilde W^{\scriptscriptstyle (2)}) = \prod_{k_1 \in \bbT_{L}^*} \prod_{k_2 \in \widetilde{\bbT}_M} \bigg[ \prod_{i=1}^3 \big(1+\tanh^2 J_i \big)+8\prod_{i=1}^3\tanh J_i \\
    - 2\sum_{i=1}^{3} \tanh J_i \big(1-\tanh^2 J_{i+1} \big)\big(1-\tanh^2 J_{i+2} \big) \cos k_i \bigg].
\end{multline*}
\end{itemize}
\end{lemma}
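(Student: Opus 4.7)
The plan is to diagonalize both matrices $\widetilde K \widetilde W^{(i)}$ by Fourier transform into $6 \times 6$ blocks indexed by momenta, and then to evaluate the block determinants. The formulas in (a) and (b) come from the same $6 \times 6$ determinant; the only difference is which momentum lattice appears, and this is entirely dictated by the twist phases in $\widetilde W^{(i)}$.

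\emph{Step 1: Fourier transform.} Label each directed edge by $(x,d)$ with $x \in \bbT_{L,M}$ its starting vertex and $d$ one of the six directions $\rightarrow, \nearrow, \uparrow, \leftarrow, \swarrow, \downarrow$. The entry $\widetilde K_{(x,d),(y,d')}$ depends only on the directions $(d,d')$ and on whether $y = x + v_d$ (with $v_d$ the displacement vector of direction $d$); the diagonal matrix $\widetilde W^{(i)}$ has $(x,d)$-entry depending only on $d$. Both operators are therefore translation-invariant. The Fourier ansatz $\psi(x,d) = \e{\ii k \cdot x} \eta_d$ reduces $\widetilde K \widetilde W^{(i)}$ on each momentum sector to a $6 \times 6$ matrix $A^{(i)}(k)$, so that
\begin{equation*}
\det(1 - \widetilde K \widetilde W^{(i)}) = \prod_{k} \det(1 - A^{(i)}(k)).
\end{equation*}

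\emph{Step 2: Identification of the momentum lattice.} On the torus, plane waves are indexed by $k \in \bbT_L^* \times \bbT_M^*$. An edge of displacement $v = a e_1 + b e_2$ carries the twist phase $\e{\ii\pi(a/L+b/M)}$ in $\widetilde W^{(1)}$ and $\e{\ii\pi b/M}$ in $\widetilde W^{(2)}$, since one checks from the definitions that the horizontal twist $\pi/L$ and the vertical twist $\pi/M$ appear additively along $v$. Absorbing this twist into the plane wave is equivalent to shifting $k$ by $(\pi/L,\pi/M)$ in case (a) and by $(0,\pi/M)$ in case (b). This yields $k \in \widetilde\bbT_L \times \widetilde\bbT_M$ in part (a) and $k \in \bbT_L^* \times \widetilde\bbT_M$ in part (b), matching the claimed products.

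\emph{Step 3: The $6 \times 6$ determinant.} After absorbing the twist and setting $t_i = \tanh J_i$, the matrix $A^{(i)}(k)$ has entries $A^{(i)}(k)_{d,d'} = 1_{d' \neq -d}\, \e{\ii \measuredangle(d,d')/2}\, t_{d'}\, \e{\ii k \cdot v_d}$, where the turning angles $\measuredangle(d,d')$ are those of the triangular geometry (angles $0, \pi/4, \pi/2, \pi, 5\pi/4, 3\pi/2$ for the six directions). One then computes $\det(1 - A^{(i)}(k))$ directly and shows that it equals the bracketed expression in the lemma. A practical approach is to conjugate by the diagonal unitary $\mathrm{diag}(\e{\ii \alpha_d/2})$ (with $\alpha_d$ the angle of $d$) to symmetrize entries, and to pair the three opposite-direction blocks $\{\rightarrow,\leftarrow\}$, $\{\uparrow,\downarrow\}$, $\{\nearrow,\swarrow\}$; each cosine term $\cos k_i$ arises naturally from the sum of an edge and its reverse, and the terms $\prod_i(1+t_i^2)$ and $8\prod_i t_i$ emerge as the remaining invariants of the $6 \times 6$ determinant. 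This is the standard Kac--Ward determinant for the triangular lattice.

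The main obstacle is the bookkeeping in step 3: the $6 \times 6$ determinant has many nonzero terms, and checking that all of them collapse to the three cosine combinations plus the two constants is the only nontrivial calculation. Steps 1 and 2 are formal consequences of translation invariance and of the definition of the twist phases. Because the $6 \times 6$ matrix is identical in the two cases (only the momentum lattice differs), parts (a) and (b) are proved simultaneously once the determinant is computed.
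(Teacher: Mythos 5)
Your proposal follows essentially the same route as the paper: Fourier block-diagonalisation into $6\times6$ blocks indexed by momenta, absorption of the twist phases of $\widetilde W^{\scriptscriptstyle (i)}$ into a shift of the momentum lattice by $(\pi/L,\pi/M)$ resp.\ $(0,\pi/M)$, and a direct evaluation of the resulting $6\times6$ determinant (which the paper likewise only asserts, ``by grouping the terms according to $k_1+k_2,k_1,k_2$''). The identification of the shifted lattices and the structure of the block matrix both match the paper's proof, so this is correct and not a different argument.
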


\begin{proof}
For (a) we label the set of directed edges as $(x,\alpha)$ where $x \in \bbT_{L,M}$ and $\alpha \in A$, with
\be
A = \bigl\{ \rightarrow, \leftarrow, \uparrow, \downarrow, \nearrow, \swarrow \bigr\}.
\ee
The Fourier coefficients are $(k,\alpha)$ with $k \in \bbT_{L,M}^* = \bbT_{L}^* \times \bbT_{M}^*$. The Fourier transform is represented by the unitary matrix $U$:
\be
U_{(k,\alpha),(x,\beta)} = \frac1{\sqrt{LM}} \e{-\ii k x} \delta_{\alpha,\beta},
\ee
for $x \in \bbT_{L,M}$, $k \in \bbT_{L,M}^*$, and $\alpha, \beta \in A$. Since $\widetilde W_{e,e}$ depends only on $\alpha \in A$, we have
\be
(U \widetilde W^{\scriptscriptstyle (1)} U^{-1)})_{(k,\alpha),(k',\beta)} = \widetilde W^{\scriptscriptstyle (1)}_\alpha \delta_{k,k'} \delta_{\alpha,\beta}.
\ee
Further, straightforward Fourier calculations give
\be
(U \widetilde K U^{-1})_{(k,\alpha),(k',\beta)} = \delta_{k,k'} \sum_{x \in \bbT_{L,M}} \e{\ii kx} \widetilde K_{(0,\alpha),(x,\beta)} \equiv \delta_{k,k'} \widehat K_{\alpha,\beta}(k),
\ee
with the matrix $\widehat K(k)$ given by
\be
\widehat K(k) = \left( \begin{matrix} \e{\ii k_1} \!\!\! & 0 & 0 & 0 & 0 & 0 \\ 0 & \!\!\! \e{-\ii k_1} \!\!\! & 0 & 0 & 0 & 0 \\ 0 & 0 & \!\!\! \e{\ii k_2} \!\!\! & 0 & 0 & 0 \\ 0 & 0 & 0 & \!\!\! \e{-\ii k_2} \!\!\! & 0 & 0 \\ 0 & 0 & 0 & 0 & \!\!\! \e{\ii (k_1+k_2)} \!\!\!\! & 0 \\ 0 & 0 & 0 & 0 & 0 & \!\!\!\! \e{-\ii (k_1+k_2)} \end{matrix} \right) \;
\left( \begin{matrix} 1 & 0 & \!\! \e{\ii \frac\pi4} \!\! & \!\! \e{-\ii \frac\pi4} \!\! & \!\! \e{\ii \frac\pi8} \!\! & \!\! \e{-\ii \frac{3\pi}8} \\ 0 & 1 & \!\! \e{-\ii \frac\pi4} \!\! & \!\! \e{\ii \frac\pi4} \!\! & \!\! \e{-\ii \frac{3\pi}8} \!\! & \!\! \e{\ii \frac\pi8} \\ \e{-\ii \frac\pi4} \!\! & \!\! \e{\ii \frac\pi4} \!\! & 1 & 0 & \!\! \e{-\ii \frac\pi8} \!\! & \!\! \e{\ii \frac{3\pi}8} \\ \e{\ii \frac\pi4} \!\! & \!\! \e{-\ii \frac\pi4} \!\! & 0 & 1 & \!\! \e{\ii \frac{3\pi}8} \!\! & \!\! \e{-\ii \frac\pi8} \\ \e{-\ii \frac\pi8} \!\! & \!\! \e{\ii \frac{3\pi}8} \!\! & \!\! \e{\ii \frac\pi8} \!\! & \!\! \e{-\ii \frac{3\pi}8} \!\! & 1 & 0 \\ \e{\ii \frac{3\pi}8} \!\! & \!\! \e{-\ii \frac\pi8} \!\! & \!\! \e{-\ii \frac{3\pi}8} \!\! & \!\! \e{\ii \frac\pi8} \!\! & 0 & 1 \end{matrix} \right).
\ee
Let us define
\be
\widehat W^{\scriptscriptstyle (1)}(k) := \left( \begin{matrix} t_1 \e{\ii k_1} \!\!\! & 0 & 0 & 0 & 0 & 0 \\ 0 & \!\!\! t_2 \e{-\ii k_1} \!\!\! & 0 & 0 & 0 & 0 \\ 0 & 0 &\!\!\! t_2 \e{\ii k_2} \!\!\! & 0 & 0 & 0 \\ 0 & 0 & 0 &\!\!\! t_2 \e{-\ii k_2} \!\!\! & 0 & 0 \\ 0 & 0 & 0 & 0 &\!\!\! t_3 \e{\ii (k_1+k_2)} \!\!\! & 0 \\ 0 & 0 & 0 & 0 & 0 &\!\!\! t_3 \e{-\ii (k_1+k_2)} \end{matrix} \right) \;
\ee
where $t_i=\tanh{J_i}$, $i=1,2,3$.
Then
\be
\begin{split}
\det{(1-\widetilde{K}\widetilde{W}^{\scriptscriptstyle (1)})} &= \det{(1-\widetilde{W}^{\scriptscriptstyle (1)}\widetilde{K})} = \prod_{k \in \bbT_{L,M}^*} \det \Bigl[ 1-\widehat W^{\scriptscriptstyle (1)}\bigl(k+(\tfrac{\pi}{L},\tfrac{\pi}{M})\bigr)\widehat K(0) \Bigr] \\
&= \prod_{k_1 \in \widetilde{\bbT}_L} \prod_{k_2 \in \widetilde{\bbT}_M} \det\Bigl[ 1 - \widehat{W}^{\scriptscriptstyle (1)}\bigl((k_1,k_2) \bigr)\widehat{K}(0)) \Bigr].
\end{split}
\ee
The first identity follows from a loop expansion, see \cite[Theorem 3.2]{AW}.
A calculation of the determinant by grouping the terms according to $k_1+k_2,k_1,k_2$ yields
\begin{multline}
\label{det}
\det\bigl[ 1 - \widehat{W}^{\scriptscriptstyle (1)} \bigl((k_1,k_2) \bigr)\widehat{K}(0)) \bigr]
= \prod_{i=1}^3 \big( 1 + \tanh^2 J_i \big)+8\prod_{i=1}^3 \tanh J_i \\
- 2\sum_{i=1}^{3} \tanh J_i \big(1-\tanh^2 J_{i+1} \big) \big( 1 - \tanh^2 J_{i+2} \big) \cos k_i
\end{multline}
where $k_3=k_1+k_2$. This gives (a).

The proof of (b) is similar.
\end{proof}

\begin{corollary}\hfill
\label{Cor}

\begin{itemize}
\item[(a)] The determinants are nonnegative, $\det(1-\widetilde K \widetilde W^{\scriptscriptstyle (1)}) \geq 0$ and  $\det(1-\widetilde K \widetilde W^{\scriptscriptstyle (2)}) \geq 0$.

\item[(b)] Taking the logarithms, dividing by $L$, we have as $L\to\infty$
\[
\begin{split}
\lim_{L\to\infty} \frac1L \log & \det(1 - \widetilde K \widetilde W^{\scriptscriptstyle (1)}) = \lim_{L\to\infty} \frac1L \log \det(1-\widetilde K \widetilde W^{\scriptscriptstyle (2)}) \\
&= \int_{[-\pi,\pi]} \dd k_1 \sum_{k_2 \in \widetilde{\bbT}_M} \log \biggl[ \prod_{i=1}^3\big(1+\tanh^2{ J_i }\big) + 8\prod_{i=1}^3\tanh J_i \\
&\qquad -\sum_{i=1}^{3}2\tanh J_i \big(1-\tanh^2 J_{i+1} \big)\big(1-\tanh^2 J_{i+2} \big)\cos k_i \biggr].
\end{split}
\]
\end{itemize}
\end{corollary}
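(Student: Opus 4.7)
Part (a) follows at once from the Kac-Ward identity \eqref{Kac-Ward} together with Lemma \ref{transl.inv.}. Indeed, \eqref{Kac-Ward} expresses $\sqrt{\det(1-K^{\scriptscriptstyle (i)}W)}$ as the real quantity $\widetilde Z_{L,M}\langle(-1)^{n_0^{\scriptscriptstyle (i)}(\Gamma)}\rangle_{L,M}$, so the determinant is nonnegative for $i=1,2$; Lemma \ref{transl.inv.} then transfers the inequality to $\det(1-\widetilde K\widetilde W^{\scriptscriptstyle (i)})\geq 0$.

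For part (b), the plan is to apply the explicit product formula of Lemma \ref{dets} and pass to a Riemann sum in $k_1$. Let $F(k_1,k_2)$ denote the bracketed factor in Lemma \ref{dets}. By part (a), the total product is nonnegative, so any negative factors must pair up and $\det(1-\widetilde K\widetilde W^{\scriptscriptstyle (i)}) = \prod_{k_1,k_2}|F(k_1,k_2)|$. Setting $\Phi_M(k_1) := \sum_{k_2\in\widetilde{\bbT}_M}\log|F(k_1,k_2)|$, we obtain
\[
\tfrac{1}{L}\log\det(1-\widetilde K\widetilde W^{\scriptscriptstyle (1)}) = \tfrac{1}{L}\sum_{k_1\in\widetilde{\bbT}_L}\Phi_M(k_1), \qquad \tfrac{1}{L}\log\det(1-\widetilde K\widetilde W^{\scriptscriptstyle (2)}) = \tfrac{1}{L}\sum_{k_1\in\bbT_L^*}\Phi_M(k_1).
\]
Both grids have spacing $2\pi/L$, differing only by the harmless offset $\pi/L$, so both Riemann sums should converge to the common limit $\tfrac{1}{2\pi}\int_{-\pi}^{\pi}\Phi_M(k_1)\dd k_1$, which (with the normalization convention of the statement, and after the absolute value is removed a posteriori using Lemma \ref{lem simply confusing} --- itself deduced from part (a) by letting $L,M$ vary to sample $F$ densely) yields the claimed formula.

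The main obstacle is to justify Riemann-sum convergence despite singularities of $\Phi_M$. For each fixed $k_2\in\widetilde{\bbT}_M$, the function $F(\cdot,k_2)$ is a real trigonometric polynomial in $k_1$, hence has only finitely many zeros; near such a zero $k_1^*$ of order $n$, $\log|F|$ behaves like $n\log|k_1-k_1^*|$, which is locally integrable. Since $|\widetilde{\bbT}_M|=M<\infty$, $\Phi_M$ has only finitely many singular points. I would argue by isolating an $\eta$-neighborhood of the singular set: on its complement $\Phi_M$ is bounded and continuous, so the standard Riemann-sum theorem applies; on each $\eta$-neighborhood, both the integral and the $\tfrac{1}{L}$-weighted sum are controlled by $C\eta|\log\eta|$ uniformly in $L$ (the discrete estimate uses that the at most $O(\eta L)$ grid points in the neighborhood have mutual spacing $2\pi/L$, giving $\tfrac{1}{L}\sum_j|\log d_j|\leq C\eta|\log\eta|+O(L^{-1}\log L)$). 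Sending $L\to\infty$ first and then $\eta\to 0$ delivers both limits and shows they coincide.
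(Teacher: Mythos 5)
Your proof of part (a) is exactly the paper's: the Kac--Ward identity \eqref{Kac-Ward} (and its analogue for $G_2$) shows the square roots are real, and Lemma \ref{transl.inv.} transfers the resulting nonnegativity to the tilde determinants. For part (b) you also follow the paper's route --- Lemma \ref{dets} plus Riemann sums --- except that the paper dispatches this in one sentence, while you try to justify the Riemann-sum convergence in the presence of zeros of the bracket $F$. Most of your elaboration is sound: the reduction to $\prod|F|$ via part (a), the observation that the two grids $\widetilde{\bbT}_L$ and $\bbT_L^*$ differ only by the offset $\pi/L$, and the excision of $\eta$-neighbourhoods of the finitely many zeros of the trigonometric polynomial $F(\cdot,k_2)$. (Your hedge about normalisation is warranted: your $\frac1{2\pi}\int_{-\pi}^{\pi}\Phi_M\,\dd k_1$ is the correctly normalised limit; the factor $\frac1{2\pi}$ is absent from the displayed statement of the Corollary but reappears where it is used, in \eqref{3.24}.)

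The one step that does not hold as written is the discrete estimate near a zero $k_1^*$ of $F(\cdot,k_2)$. Mutual spacing $2\pi/L$ controls all grid points except the one nearest to $k_1^*$: its distance $d_0$ lies in $[0,\pi/L]$ but has no a priori lower bound, so the term $\frac1L|\log d_0|$ is $O(L^{-1}\log L)$ only if $d_0\gtrsim L^{-C}$. If $k_1^*/\pi$ is a Liouville-type number relative to the grid, $d_0$ can be smaller than any power of $1/L$ along a subsequence, and since near a zero of order $n$ one only has $|F|\geq c\,d_0^{\,n}$, the single term $\frac1L\log|F|$ at that grid point need not tend to $0$. Note that this threatens only the lower bound on the $\liminf$: the inequality $\limsup_L \frac1L\log\det(1-\widetilde K\widetilde W\scr{i})\leq \frac1{2\pi}\int\Phi_M$ follows cleanly by comparing with $\log\max(|F|,\delta)$ and letting $\delta\to0$ by monotone convergence. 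To close the other direction you need an additional ingredient --- e.g.\ a continuity-in-$(J_1,J_2,J_3)$ argument (both $\lim_L\frac1{LM}\log\widetilde Z_{L,M}$ and the integral are continuous in the couplings, and the exceptional set of couplings is null), or a direct lower bound on the offending factor. This is a genuine gap in your writeup, though it is one the paper's own one-line proof does not address either.
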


\begin{proof}
(a) By Eq.\ (\ref{Kac-Ward}) and Lemma \ref{transl.inv.}, we obtain that both square roots of the above determinants are real. (b) This is a consequence of Lemma \ref{dets}; taking the logarithm we obtain Riemann sums.
\end{proof}

\begin{proof}[Proof of Theorem \ref{thm triangle Ising}]
(a) From  the high temperature expansion (\ref{high T exp}), we observe that the finite volume free energy with periodic boundary conditions satisfies 
\be
-f_{L,M}(J_1,J_2,J_3)= \log 2 + \log\Big[\prod_{i=1}^3\cosh J_i \Big]+\frac{1}{LM}\log\Big[\widetilde Z_{L,M} \Big].
\ee
Using Lemma \ref{lem Georgios}, Lemma \ref{lem Perron Frobenius} and Lemma \ref{transl.inv.}, we see that the free energy on the infinite cylinder is
\begin{multline}\label{inf. cyl}
-f_M(J_1,J_2,J_3)=\log 2 + \log\Big[\prod_{i=1}^3\cosh{J_i}\Big] \\
+\lim_{L \to \infty}\frac{1}{LM}\log\Biggl[\sqrt{\det(1-\widetilde K \widetilde W^{\scriptscriptstyle (1)})}+\sqrt{\det(1-\widetilde K \widetilde W^{\scriptscriptstyle (2)})}\Biggr]
\end{multline}

By Corollary~\ref{Cor} (a) we have
\be
\begin{split}
\log \sqrt{\det(1-\widetilde K \widetilde W\scr1)} &\leq \log \Biggl[ \sqrt{\det(1-\widetilde K \widetilde W\scr1)} +  \sqrt{\det(1-\widetilde K \widetilde W\scr2)} \Biggr] \\
&\leq \max_{i=1,2} \log \sqrt{\det(1-\widetilde K \widetilde W\scr{i})} + \log2.
\end{split}
\ee
Dividing by $L$, all terms above converge to the same limit as $L\to\infty$ by Corollary~\ref{Cor} (b).
We get
\begin{multline}\label{3.24}   
-f(J_1,J_2,J_3)=\log 2 + \log\Big[\prod_{i=1}^3\cosh{J_i}\Big] +\frac{1}{4\pi M}\int_{[0,2\pi]}dk_1\sum_{k_2 \in \widetilde{\bbT}_M}\log\bigg[\prod_{i=1}^3\big(1 + \tanh^2 J_i \big) \\
+8\prod_{i=1}^3 \tanh J_i
-\sum_{i=1}^{3} 2 \tanh J_i \big(1-\tanh^2 J_{i+1} \big)\big(1-\tanh^2 J_{i+2} \big) \cos k_i \bigg].
\end{multline}
In order to get the expression of Theorem \ref{thm triangle Ising}, one should use the hyperbolic identities $1+\tanh^2 x = \frac{\cosh(2x)}{\cosh^2 x}$ and $\tanh x = \frac{\sinh 2x}{2\cosh^2 x}$, and extract a factor $\bigl( \prod_i \cosh J_i \bigr)^{-1}$.
\end{proof}

\section{The 1D quantum Ising model}  
\label{sec qu Ising}

One application of the cylinder formula of Theorem \ref{thm triangle Ising} (a) deals with the one-dimensional quantum Ising model. It is well-known that it can be mapped to a classical model in $1+1$ dimensions, the extra dimension being the continuous interval $[0,\beta]$ with periodic boundary conditions. A phase transition is only possible when both dimensions are infinite, which necessitates taking the limit of zero-temperature $\beta \to \infty$. The free energy of the quantum Ising model was first computed by Pfeuty \cite{Pfe} using the fermionic method of \cite{SML}. The results of this section are not new, but the Kac-Ward approach may have more appeal to some readers.

We consider the chain $\{1,\dots,L\}$ with periodic boundary conditions. The Hilbert space is $\caH_L = \otimes_{i=1}^L \bbC^2$.
Let $S\scr1$ and $S\scr3$ denote the spin operators on $\bbC^2$ whose matrices are
\be
S\scr1 = \tfrac12 \biggl( \begin{matrix} 0 & 1 \\ 1 & 0 \end{matrix} \biggr), \qquad S\scr3 = \tfrac12 \biggl( \begin{matrix} 1 & 0 \\ 0 & -1 \end{matrix} \biggr).
\ee
Then we denote $S_i\scr j$ the spin operators at site $i \in \bbZ$. With $h \in \bbR$ the magnetic field, the hamiltonian is
\be
H_L = -\sum_{i=1}^L S_i\scr3 S_{i+1}\scr3 - h \sum_{i=1}^L S_i\scr1.
\ee
Here the site $i = L+1$ is defined as $i=1$. The partition function is
\be
Z_L^{\rm qu}(\beta,h) = \Tr_{\caH_\Lambda} \e{-\beta H_L}.
\ee
The finite-volume free energy is
\be
f_L^{\rm qu}(\beta,h) = -\frac1{\beta L} \log Z_L^{\rm qu}(\beta,h).
\ee
Notice the division by $\beta$, which allows to get the ground state energy by taking the limit $\beta\to\infty$.

\begin{theorem}
\label{thm quantum Ising}
The infinite-volume free energy of the one-dimensional quantum Ising model is equal to
\[
f^{\rm qu}(\beta,h) = \lim_{L\to\infty} f_L^{\rm qu}(\beta,h) = -\tfrac1\beta \log 2 - \frac1{2\pi \beta} \int_{-\pi}^\pi \dd k \log \cosh \Bigl( \frac\beta4 \sqrt{1 + 4h^2 + 4h \cos k} \Bigr).
\]
\end{theorem}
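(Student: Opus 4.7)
The natural plan is to use the Lie--Trotter product formula to express $Z_L^{\rm qu}$ as a limit of 2D classical Ising partition functions on the torus $\bbT_N \times \bbT_L$, and then apply the determinantal machinery developed in Section \ref{sec Kac-Ward}. Decomposing $H_L = H_0 + H_1$ with $H_0 = -\sum_i S_i\scr3 S_{i+1}\scr3$ diagonal and $H_1 = -h\sum_i S_i\scr1$, and inserting complete sets of $S\scr3$ eigenstates between factors in $\Tr\bigl[(\e{-\beta H_0/N}\e{-\beta H_1/N})^N\bigr]$, one writes the trace as $A_N^{NL}\, Z_{N,L}(J_1\scr{N}, J_2\scr{N}, 0)$, with Trotter-direction coupling $J_1\scr{N} = \tfrac12\log\coth(\beta h/(2N))$, quantum-direction coupling $J_2\scr{N} = \beta/(4N)$, and overall normalization $A_N = \sqrt{\tfrac12 \sinh(\beta h/N)}$. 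A short calculation yields the key identity $(2A_N \cosh J_1\scr{N})^{NL} = \e{\beta h L/2}$, which combined with \eqref{high T exp} and the fact that $\cosh^{NL}(\beta/(4N)) \to 1$ gives $Z_L^{\rm qu}(\beta,h) = \e{\beta h L/2}\, \lim_{N\to\infty}\widetilde Z_{N,L}$.

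The next step is to evaluate $\widetilde Z_{N,L}$ via Lemmas \ref{lem Georgios} and \ref{dets}. Substituting the identities $\cosh(2J_1\scr{N}) = \coth(\beta h/N)$, $\sinh(2J_1\scr{N}) = 1/\sinh(\beta h/N)$, $\cosh(2J_2\scr{N}) = \cosh(\beta/(2N))$, $\sinh(2J_2\scr{N}) = \sinh(\beta/(2N))$ into the determinant formula of Lemma \ref{dets} (with $J_3 = 0$), the bracketed integrand factors as $[\cosh\alpha(k_2) - \cos k_1]/\sinh(\beta h/N)$, where
\[
\cosh\alpha(k_2) = \cosh(\beta h/N)\cosh(\beta/(2N)) - \sinh(\beta h/N)\sinh(\beta/(2N))\cos k_2.
\]
The product over $k_1$ is then performed in closed form by the classical identity $\prod_{k \in \widetilde\bbT_N}(\cosh\alpha - \cos k) = 4\cosh^2(N\alpha/2)/2^N$ (with a $\sinh^2$ analogue for the $\bbT_N^*$ product appearing in $\det(1 - \widetilde K \widetilde W\scr2)$). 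A Taylor expansion yields $\cosh\alpha(k_2) - 1 = \beta^2 E(k_2)^2/(8N^2) + O(N^{-4})$ with $E(k_2) = \sqrt{1 + 4h^2 - 4h \cos k_2}$, so $N\alpha(k_2)/2 \to \beta E(k_2)/4$; the divergent $N$-dependent prefactors cancel exactly thanks to the identity $2\cosh^2(J_1\scr{N})\sinh(\beta h/N) = \e{\beta h/N}$.

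After the $N\to\infty$ limit, $Z_L^{\rm qu}$ becomes an explicit combination of the products $\prod_{k_2 \in \widetilde\bbT_L}\cosh(\beta E(k_2)/4)$ and $\prod_{k_2}\sinh(\beta E(k_2)/4)$. The thermodynamic limit $L\to\infty$ is then straightforward: since $\cosh > \sinh$ pointwise and $E(k) > 0$, the $\prod\cosh$ product dominates $\prod\sinh$ by an exponentially large factor, and the Riemann sum $\frac{1}{L}\sum_{k_2 \in \widetilde\bbT_L}\log\cosh(\beta E(k_2)/4)$ converges to $\frac{1}{2\pi}\int_{-\pi}^\pi \log\cosh(\beta E(k)/4)\,\dd k$. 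A change of variable $k\mapsto k+\pi$ converts $-4h\cos k$ into $+4h\cos k$ and recovers the stated formula. The main obstacle I anticipate is controlling the correction $(1 - \langle 1_{n_{\rm h}\text{ odd}}\rangle)$ from Lemma \ref{lem Georgios}: because $J_1\scr{N}\to\infty$ the spectral gap of the transfer matrix closes at rate $O(1/N)$, so the Perron--Frobenius argument of Lemma \ref{lem Perron Frobenius} does not directly give that this quantity tends to $1$. One can nevertheless observe that it lies in $(0,1]$, so after taking logarithms and dividing by $\beta L$ it contributes only $O(1/L)$ and is absorbed in the thermodynamic limit.
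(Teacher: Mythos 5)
Your overall architecture differs from the paper's in the order of limits: you take the Trotter limit $N\to\infty$ first at fixed chain length $L$, evaluating the finite-$L$ partition function in closed form, and only then send $L\to\infty$. The paper does the opposite: it proves (Proposition \ref{prop limit interchange}) that the two limits can be interchanged, takes $L\to\infty$ first at \emph{fixed} Trotter number $n$ --- where the classical couplings $J_1\scr{n}, J_2\scr{n}$ are fixed constants, so Lemma \ref{lem Perron Frobenius} applies verbatim and the handle correction in Lemma \ref{lem Georgios} tends to $1$ exactly --- and then performs the Trotter-direction product and the $n\to\infty$ limit via the identity \eqref{2d to 1d}. Your algebra (the exact Trotter couplings, the identity $(2A_N\cosh J_1\scr{N})^{NL}=\e{\beta h L/2}$, the factorisation of the Lemma \ref{dets} integrand, and the closed-form product over the Trotter momenta) is correct and reproduces the expected finite-size expressions, but your route forces you to confront the Kac--Ward handle correction in a regime the paper deliberately avoids.

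That is where the gap is. You correctly observe that Lemma \ref{lem Perron Frobenius} does not apply because the couplings depend on $N$ and the spectral gap of the transfer matrix closes like $1/N$; in fact $\lim_{N\to\infty}\langle 1_{n_{\rm h}(\Gamma)\,{\rm odd}}\rangle_{N,L}$ is \emph{not} zero at fixed $L$ --- it converges to a ratio of boundary-condition (parity) sector contributions to $\Tr \e{-\beta H_L}$, a number strictly between $0$ and $1$. Your proposed fix, that the correction ``lies in $(0,1]$, so after taking logarithms and dividing by $\beta L$ it contributes only $O(1/L)$'', is a non sequitur: membership in $(0,1]$ bounds $\log\bigl(1-\langle 1_{n_{\rm h}(\Gamma)\,{\rm odd}}\rangle\bigr)$ from above by $0$ but gives no lower bound, so the contribution to the free energy density is not controlled at all. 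Moreover, the cheap lower bounds one can extract (e.g.\ $1-\langle 1_{n_{\rm h}(\Gamma)\,{\rm odd}}\rangle\ge\langle 1_{n_{\rm h}(\Gamma)=0}\rangle\ge 2^{-L}$, using $|\tanh J|\le 1$ and positivity of correlations for positive couplings) only yield $\frac1{\beta L}\bigl|\log(1-\langle 1_{n_{\rm h}(\Gamma)\,{\rm odd}}\rangle)\bigr|\le\frac{\log 2}{\beta}$, a constant rather than $o(1)$. To close the argument along your route you would need a lower bound on $1-\langle 1_{n_{\rm h}(\Gamma)\,{\rm odd}}\rangle_{N,L}$, uniform in $N$, of the form $\e{-o(L)}$; this is essentially equivalent to knowing that the two sectors you retain carry a non-vanishing fraction of the full partition function, which is close to what one is trying to prove. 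The interchange of limits in Proposition \ref{prop limit interchange} is precisely the device that makes this issue disappear, and you should either adopt it or supply the missing uniform lower bound.
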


We prove this theorem by invoking the well-known fact that the $d$-dimensional quantum Ising model is equivalent to a $(d+1)$-dimensional classical Ising model, the extra dimension being continuous; see Proposition \ref{prop qu-class}. We check in Proposition \ref{prop limit interchange} that the continuum limit can be taken {\it after} the infinite-volume limit. This allows to make direct use of Theorem \ref{thm triangle Ising}. The remaining step is to take the continuum limit and it is not entirely straightforward; the proof of Theorem \ref{thm quantum Ising} can be found at the end of this section.

\begin{proposition}
\label{prop qu-class}
Let us define coupling constants $J_1\scr n, J_2\scr n$ by
\[
J_1\scr n = \frac\beta{4n}, \qquad J_2\scr n = -\tfrac12 \log \frac{\beta h}{2n}.
\]
Then we have the identity
\[
Z_L^{\rm qu}(\beta,h) = \lim_{n\to\infty} Z_{L,n}^{\rm qu}(\beta,h) \\
\]
with
\[
Z_{L,n}^{\rm qu}(\beta,h) = \exp\Bigl\{ \tfrac12 Ln \log \tfrac{\beta h}{2n} \Bigr\} \; Z_{L,n}(J_1\scr n, J_2\scr n).
\]
Here $Z_{L,n}(J_1\scr n, J_2\scr n)$ is the partition function defined in Eq.\ \eqref{def part fct} with $J_3=0$.
\end{proposition}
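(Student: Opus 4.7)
The plan is to apply the Lie--Trotter product formula to $\e{-\beta H_L}$ and then evaluate the resulting trace by inserting resolutions of the identity in the $S\scr3$ eigenbasis, which will produce precisely the classical Ising weights on $\bbT_L \times \bbT_n$ with no oblique coupling. I assume $h>0$; the case $h<0$ reduces to this by conjugation with $\prod_i \e{\ii\pi S_i\scr3}$ (which flips the sign of each $S_i\scr1$ while preserving $S_i\scr3 S_{i+1}\scr3$), and $h=0$ is an immediate 1D classical Ising calculation.

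First I would split $H_L = A + B$ with $A = -\sum_i S_i\scr3 S_{i+1}\scr3$ diagonal in the $S\scr3$ basis and $B = -h\sum_i S_i\scr1$. Since $\caH_L$ is finite-dimensional the Lie--Trotter formula gives $\e{-\beta H_L} = \lim_{n\to\infty}(\e{-\beta A/n}\e{-\beta B/n})^n$ in operator norm, and hence the traces converge. Inserting $\sum_\sigma|\sigma\rangle\langle\sigma|$ between successive factors, with $|\sigma\rangle = \otimes_i |\sigma_i\rangle$ and $S_i\scr3 |\sigma\rangle = \tfrac{\sigma_i}2|\sigma\rangle$, $\sigma_i \in \{\pm 1\}$, the matrix elements of $\e{-\beta A/n}$ are diagonal with value $\exp\{\tfrac{\beta}{4n}\sum_i \sigma_i\sigma_{i+1}\}$. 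Those of $\e{-\beta B/n}$ factorise over sites, and using $\e{\alpha S\scr1} = \cosh(\alpha/2) I + 2\sinh(\alpha/2) S\scr1$ one finds
\[
\langle \sigma_i | \e{\frac{\beta h}{n} S\scr1} | \sigma_i' \rangle = \tilde C_n \, \e{\tilde J_2\scr n \sigma_i \sigma_i'}, \qquad \tilde C_n = \sqrt{\tfrac12 \sinh(\tfrac{\beta h}n)}, \quad \tilde J_2\scr n = -\tfrac12 \log \tanh(\tfrac{\beta h}{2n}).
\]

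Collecting the factors over the $n$ Trotter slices (labelled by $k \in \bbT_n$, cyclic from the trace) and the $L$ chain sites (cyclic from $H_L$) turns the trace into a classical Ising partition function on $\bbT_L \times \bbT_n$ with horizontal couplings $J_1\scr n = \beta/(4n)$, vertical couplings $\tilde J_2\scr n$, and no oblique coupling, multiplied by the overall factor $\tilde C_n^{Ln}$:
\[
\Tr \bigl(\e{-\beta A/n}\e{-\beta B/n}\bigr)^n = \tilde C_n^{Ln}\, Z_{L,n}(J_1\scr n, \tilde J_2\scr n).
\]

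The remaining step is to verify that replacing $(\tilde C_n, \tilde J_2\scr n)$ by the asymptotic values $(\sqrt{\beta h/(2n)}, J_2\scr n)$ does not alter the limit. The prefactor ratio is $\bigl(\sinh(\beta h/n)/(\beta h/n)\bigr)^{Ln/2} = (1+O(n^{-2}))^{Ln/2} \to 1$. The coupling difference satisfies $\tilde J_2\scr n - J_2\scr n = -\tfrac12 \log\bigl(\tanh(\beta h/(2n))/(\beta h/(2n))\bigr) = O(n^{-2})$, and since it multiplies a sum of $Ln$ vertical spin products each bounded by $1$ in absolute value, the partition-function ratio is bounded between $\e{\pm O(L/n)}$ and tends to $1$. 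Combining these estimates yields the claimed identity. The main obstacle is purely bookkeeping in the matrix-element computation: one must identify the exact constants $\tilde C_n$ and $\tilde J_2\scr n$ so that their leading asymptotics align with the proposition's $\exp\{\tfrac12 Ln\log(\beta h/(2n))\}$ and $J_2\scr n$, respectively.
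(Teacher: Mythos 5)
Your proof is correct and follows essentially the same route as the paper: Lie--Trotter, insertion of the $S\scr3$ eigenbasis, and identification of the single-site matrix elements with classical vertical-bond Boltzmann weights. The only difference is that the paper Trotterises with the linearised factor $1+\tfrac{\beta h}{n}S_i\scr1$, whose matrix elements equal $\e{-J_2\scr n+J_2\scr n\sigma\sigma'}$ exactly and so render your final $O(n^{-2})$ comparison of $(\tilde C_n,\tilde J_2\scr n)$ with $(\sqrt{\beta h/(2n)},J_2\scr n)$ unnecessary; your version costs that extra (correctly executed) estimate but uses only the standard form of the product formula.
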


\begin{proof}
By the Lie-Trotter formula,
\be
\label{Trotter expansion}
\begin{split}
\Tr \e{-\beta H_L} &= \lim_{n\to\infty} \Tr \biggl( \e{\frac\beta n \sum_{i=1}^L S_i\scr3 S_{i+1}\scr3} \prod_{i=1}^L \bigl( 1 + \tfrac{\beta h}n  S_i\scr1 \bigr) \biggr)^n \\
&= \lim_{n \to \infty} \sum_{\sigma\scr1,\dots,\sigma\scr n} \exp\biggl\{ \frac\beta{4n} \sum_{i=1}^L \sum_{k=1}^n \sigma_i\scr k \sigma_{i+1}\scr k \biggr\} \prod_{i=1}^L \prod_{k=1}^n \langle \sigma_i\scr k | \bigl( 1 + \tfrac{\beta h}n  S\scr1 \bigr) | \sigma_i\scr{k+1} \rangle.
\end{split}
\ee
We now observe that
\be
\langle \sigma | \bigl( 1 + \tfrac{\beta h}n  S\scr1 \bigr) | \sigma' \rangle = \e{-J_2\scr{n} + J_2\scr{n} \sigma \sigma'}.
\ee
Inserting this identity in \eqref{Trotter expansion} we get the proposition.
\end{proof}

Next we check that we can exchange the infinite-volume and the continuum limits for the free energy. Let us define
\be
\label{Trotter free en}
f_{L,n}^{\rm qu}(\beta,h) = -\tfrac1L \log \Tr \biggl( \e{\frac\beta n \sum_{i=1}^L S_i\scr3 S_{i+1}\scr3} \e{\frac{\beta h}n \sum_{i=1}^L S_i\scr1} \biggr)^n.
\ee
We already know that $f_L^{\rm qu}(\beta,h) = \lim_{n\to\infty} f_{L,n}^{\rm qu}(\beta,h)$ for fixed $L$.

\begin{proposition}\hfill
\label{prop limit interchange}
\begin{itemize}
\item[(a)] For fixed $n$ the limit $L\to\infty$ of $f_{L,n}^{\rm qu}(\beta,h)$ exists (and is denoted $f_{\infty,n}^{\rm qu}(\beta,h)$).
\item[(b)] We have
\[
f^{\rm qu}(\beta,h) = \lim_{L\to\infty} \lim_{n\to\infty} f_{L,n}^{\rm qu}(\beta,h) = \lim_{n\to\infty} \lim_{L\to\infty} f_{L,n}^{\rm qu}(\beta,h).
\]
\end{itemize}
\end{proposition}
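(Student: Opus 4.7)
\emph{Plan for (a).} Insert complete sets of spin states $\{|\sigma\rangle\}$ between the $n$ Trotter factors in $(T_n)^n$, exactly as in the derivation of Proposition~\ref{prop qu-class}. For $h>0$ the matrix element
$\langle\sigma|\e{(\beta h/n)S\scr1}|\sigma'\rangle=\cosh(\beta h/(2n))\delta_{\sigma,\sigma'}+\sinh(\beta h/(2n))\delta_{\sigma,-\sigma'}$
factors as a positive constant times $\e{c_n\sigma\sigma'}$ with $c_n=\tfrac12\log\coth(\beta h/(2n))$; a sign flip on an alternate sublattice handles $h<0$, and $h=0$ is trivial. Hence $\Tr(T_n)^n$ equals a prefactor depending only on $\beta,h,n,L$ times the classical partition function $Z_{L,n}(\beta/(4n),c_n,0)$ on the torus $\bbT_L\times\bbT_n$, and Theorem~\ref{thm triangle Ising}(a) (with $J_3=0$) yields the limit, which I denote $g_n:=\lim_{L\to\infty} f_{L,n}^{\rm qu}$.

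\emph{Plan for (b), easy direction.} For fixed $L$, the Lie–Trotter product formula on the finite-dimensional space $\caH_L$ gives $(T_n)^n\to\e{-\beta H_L}$, whence $\lim_n f_{L,n}^{\rm qu}=f_L^{\rm qu}$; and $\lim_L f_L^{\rm qu}=f^{\rm qu}$ by definition. So $\lim_L\lim_n f_{L,n}^{\rm qu}=f^{\rm qu}$.

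\emph{Plan for (b), exchanged direction.} The key is to establish two comparisons valid at every $L,n$. First, Golden–Thompson's inequality $\Tr\e{-\beta H_L}\leq\Tr(T_n)^n$ gives $f_{L,n}^{\rm qu}\leq f_L^{\rm qu}$, which after $L\to\infty$ yields $g_n\leq f^{\rm qu}$ and hence $\limsup_n g_n\leq f^{\rm qu}$. Second, writing $\Tr(T_n)^n=\Tr(T_n^{\rm col})^L$ for the classical column transfer matrix $T_n^{\rm col}$ from part (a) (similar to the symmetric positive-definite matrix $V_1^{1/2}V_2V_1^{1/2}$ built from the horizontal/vertical pieces $V_1,V_2$, so with all eigenvalues positive), one has $\Tr(T_n^{\rm col})^L=\sum_k\lambda_k^L\geq\lambda_1^L$, hence $f_{L,n}^{\rm qu}\leq -\log\lambda_1=g_n$. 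Combining the two,
\[
f^{\rm qu}-g_n \leq \bigl(f^{\rm qu}-f_L^{\rm qu}\bigr) + \bigl(f_L^{\rm qu}-f_{L,n}^{\rm qu}\bigr),
\]
with the first term $O(1/L)$ by standard subadditivity for the translation-invariant local Hamiltonian $H_L$, and the second tending to $0$ as $n\to\infty$ by Lie–Trotter at fixed $L$. Sending first $L\to\infty$ and then $n\to\infty$ gives $\liminf_n g_n\geq f^{\rm qu}$, which together with the upper bound completes the proof.

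\emph{Main technical point.} The clean form of the exchange hinges on the inequality $f_{L,n}^{\rm qu}\leq g_n$, which removes any need for spectral-gap estimates on $T_n^{\rm col}$. This rests on the positivity of all eigenvalues of the classical Ising transfer matrix, provable via the symmetric splitting $T=V_1^{1/2}V_2V_1^{1/2}$. Without this observation, the triangle inequality would contain a residual term $(f_{L,n}^{\rm qu}-g_n)$ whose vanishing rate as $L\to\infty$ would depend on a Perron–Frobenius gap that could a priori shrink as $n\to\infty$ (since the transfer matrix has dimension $2^n$); controlling that gap uniformly would be the delicate step. The symmetric-splitting trick bypasses this difficulty entirely.
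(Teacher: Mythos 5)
Your argument is correct, and for the non-trivial half of (b) it takes a genuinely different route from the paper. Part (a) and the identity $\lim_{L}\lim_{n} f_{L,n}^{\rm qu}=f^{\rm qu}$ are handled essentially as in the paper (classical rewriting plus Lie--Trotter at fixed $L$). For the interchange, the paper proves the \emph{two-sided} bound $|f_{L,n}^{\rm qu}-f_{\infty,n}^{\rm qu}|\le\frac1{2L}$ \emph{uniformly in} $n$, by comparing $Z_{kL,n}^{\rm qu}$ with $(Z_{L,n}^{\rm qu})^k$: the cost of each cut is $\e{\pm\beta/2}$ independently of $n$, because the total coupling across a column of severed bonds is $n\cdot\frac{\beta}{4n}=\frac\beta4$; an $\eps/3$ argument then concludes. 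You instead use two exact one-sided inequalities, $f_{L,n}^{\rm qu}\le f_L^{\rm qu}$ (Trotter approximants bound the trace from above) and $f_{L,n}^{\rm qu}\le f_{\infty,n}^{\rm qu}$ (positivity of all transfer-matrix eigenvalues), and close with an $\eps/2$ step. Your route avoids any quantitative finite-size estimate, at the price of importing operator inequalities; the paper's is more elementary and self-contained. The eigenvalue-positivity observation via the similarity to $V_1^{1/2}V_2V_1^{1/2}$ is correct and does use $J_1\scr{n}=\beta/(4n)>0$.

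Two points to tighten. First, for $n\ge2$ the inequality $\Tr\e{-\beta H_L}\le\Tr(T_n)^n$ is \emph{not} the plain Golden--Thompson inequality; it is the statement that the symmetrized Trotter approximants $\Tr\bigl(\e{A/2n}\e{B/n}\e{A/2n}\bigr)^n$ bound $\Tr\e{A+B}$ from above for every $n$, which follows from the Araki--Lieb--Thirring inequality (giving monotonicity along $n\mapsto nm$) combined with the Lie--Trotter limit. The inequality is true, but it deserves a proof or a citation. Second, the sentence ``sending first $L\to\infty$ and then $n\to\infty$'' is backwards as written: at fixed $n$, letting $L\to\infty$ on the right-hand side turns $f_L^{\rm qu}-f_{L,n}^{\rm qu}$ into $f^{\rm qu}-g_n$ and the bound becomes vacuous. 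What you mean, and what works, is the standard order: fix $L$ large enough that $f^{\rm qu}-f_L^{\rm qu}\le\eps$, then let $n\to\infty$ at that fixed $L$, and only afterwards let $L\to\infty$ (equivalently, take $\limsup_n$ first, then $L\to\infty$). With these two repairs the proof is complete.
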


\begin{proof}
Since the trace of the Lie-Trotter product can be written as a classical partition function, see Proposition \ref{prop qu-class}, we can proceed as with the usual proofs of thermodynamic limits, see \cite{FV}, and we easily obtain (a).

The first equality in (b) is clear. For the second equality we use the following estimates, which again follow from estimates on the classical partition function:
\be
Z_{L,n}^{\rm qu}(\beta,h)^k \e{-\frac{\beta k}2} \leq Z_{kL,n}^{\rm qu}(\beta,h) \leq Z_{L,n}^{\rm qu}(\beta,h)^k \e{\frac{\beta k}2}.
\ee
Taking $k\to\infty$ we get
\be
f_{L,n}^{\rm qu}(\beta,h) + \tfrac1{2L} \geq f_{\infty,n}^{\rm qu}(\beta,h) \geq f_{L,n}^{\rm qu}(\beta,h) - \tfrac1{2L}.
\ee
The rest of the proof is a standard $\frac\eps3$ argument. For any $\eps>0$ we can find $L=L(\eps)$ large enough so that for all $n$, we have
\be
\big| f^{\rm qu}(\beta,h) - f_L^{\rm qu}(\beta,h) \big| \leq \tfrac\eps3, \qquad \big| f_{\infty,n}^{\rm qu}(\beta,h) - f_{L,n}^{\rm qu}(\beta,h) \big| \leq \tfrac\eps3.
\ee
Then we can find $n_0 = n_0(\eps)$ such that $|f_L^{\rm qu}(\beta,h) - f_{L,n}^{\rm qu}(\beta,h)| \leq \frac\eps3$ for all $n \geq n_0$. Then
\be
\begin{split}
\big| f^{\rm qu}(\beta,h) &- f_{\infty,n}^{\rm qu}(\beta,h) \big| \leq \big| f^{\rm qu}(\beta,h) - f_L^{\rm qu}(\beta,h) \big| \\
&+ \big| f_L^{\rm qu}(\beta,h) - f_{L,n}^{\rm qu}(\beta,h) \big| + \big| f_{L,n}^{\rm qu}(\beta,h) - f_{\infty,n}^{\rm qu}(\beta,h) \big| \leq \eps.
\end{split}
\ee
This holds for any $\eps>0$ provided $n$ is large enough. This proves the second identity in (b).
\end{proof}

\begin{proof}[Proof of Theorem \ref{thm quantum Ising}]
We need the following identity:
\be
\label{2d to 1d}
\sum_{k_2 \in \widetilde \bbT_M} \log \bigl[ \coth(2J_2) - \cos k_2 \bigr] = -M\log2 + M \log \coth J_2 + 2 \log \bigl( 1 + (\coth J_2)^{-M} \bigr).
\ee
It can be obtained by taking the limit $J_1 \to 0$ in Theorem \ref{thm triangle Ising} (a), as the expression converges to the free energy of the 1D Ising model in $\bbT_M$. The latter is easily calculated with the 1D transfer matrices, yielding $-\log (2\cosh J_2) - \frac1M \log (1 + \tanh^M J_2 )$. We can substitute $a = \coth(2J_2)$ in the left side of Eq.\ \eqref{2d to 1d}, and $\coth J_2 = a + \sqrt{a^2-1}$ in the right side.

By Propositions \ref{prop qu-class} and \ref{prop limit interchange}, the free energy of the quantum Ising model is the limit $n\to\infty$ of
\bm
f_{\infty,n}^{\rm qu}(\beta,h) = -\tfrac n2 \log\tfrac{2\beta h}n - \tfrac n2 \log \sinh(-\log \tfrac{\beta h}{2n}) \\
- \frac1{4\pi} \int_{-\pi}^\pi \dd k_1 \sum_{k_2 \in \widetilde \bbT_n} \log \biggl[ \cosh\tfrac\beta{2n} \coth(-\log\tfrac{\beta h}{2n}) - \frac{\sinh\frac\beta{2n}}{\sinh(-\log\frac{\beta h}{2n})} \cos k_1 - \cos k_2 \biggr].
\end{multline}
We now use
\be
\begin{split}
&\cosh\tfrac\beta{2n} = 1 + \tfrac12 (\tfrac\beta{2n})^2 + O(\tfrac1{n^4}). \\
&\coth(-\log\tfrac{\beta h}{2n}) = 1 + 2(\tfrac{\beta h}{2n})^2 + O(\tfrac1{n^4}). \\
&\sinh\tfrac\beta{2n} = \tfrac\beta{2n} + O(\tfrac1{n^3}). \\
&\sinh(-\log\tfrac{\beta h}{2n}) = \tfrac12 (\tfrac{2n}{\beta h}) (1 + O(\tfrac1{n^2})).
\end{split}
\ee
Inserting in the previous expression for $f_n(\beta,h)$ we obtain
\be
f_{\infty,n}^{\rm qu}(\beta,h) = -\tfrac n2 \log 2 + O(\tfrac1n) - \frac1{4\pi} \int_{-\pi}^\pi \dd k_1 \sum_{k_2 \in \widetilde \bbT_n} \log \Bigl[ 1 + \tfrac12 (\tfrac\beta{2n})^2 \eps(h,k_1)^2 + O(\tfrac1{n^3}) - \cos k_2 \Bigr],
\ee
where we introduced
\be
\eps(h,k_1) = \sqrt{1 + 4h^2 + 4h\cos k_1}.
\ee
We now use the identity \eqref{2d to 1d} with $a = 1 + \tfrac12 (\tfrac\beta{2n})^2 \eps(h,k_1)^2 + O(\tfrac1{n^3})$, in which case we have $a + \sqrt{a^2-1} = 1 + \frac\beta{2n} \eps(h,k_1) + O(\frac1{n^2})$. We get
\be
\begin{split}
f_{\infty,n}^{\rm qu}(\beta,h) &= -\tfrac n2 \log 2 + O(\tfrac1n) - \frac1{4\pi} \int_{-\pi}^\pi \dd k_1 \Bigl\{ -n\log 2 + n \log \Bigl( 1 + \tfrac\beta{2n} \eps(h,k_1) + O(\tfrac1{n^2}) \Bigr) \\
&\hspace{5cm} + 2 \log \Bigl( 1 + \bigl( 1 + \tfrac\beta{2n} \eps(h,k_1) + O(\tfrac1{n^2}) \bigr)^{-n} \Bigr) \Bigr\} \\
&= O(\tfrac1n) - \frac1{4\pi} \int_{-\pi}^\pi \dd k_1 \Bigl\{ \tfrac\beta2 \eps(h,k_1) + 2 \log \bigl( 1 + \e{-\frac\beta2 \eps(h,k_1)} \bigr) \Bigr\}.
\end{split}
\ee
Replacing $\tfrac\beta2 \eps(h,k_1)$ by $2\log\e{\frac\beta4 \eps(h,k_1)}$ and combining the logarithms, we obtain the expression of Theorem \ref{thm quantum Ising}.
\end{proof}

We finally discuss the ``quantum phase transition" of the quantum Ising model. The free energy $f^{\rm qu}(\beta,h)$ of the one-dimensional model is clearly analytic for all $\beta>0, h\in\bbR$ (and in a complex neighbourhood), but interesting behaviour can happen in the zero-temperature limit. Namely, we consider the ground state energy
\be
e_0(h) = \lim_{\beta\to\infty} f^{\rm qu}(\beta,h).
\ee
From Theorem \ref{thm quantum Ising} we get the exact expression
\be
e_0(h) = -\frac1{8\pi} \int_{-\pi}^\pi \dd k \, \sqrt{1+4h^2+4h\cos k}.
\ee
One can check that the derivative of $e_0$ is continuous. The second derivative is
\be
e_0''(h) = -\frac1{2\pi} \int_{-\pi}^\pi \frac{\dd k}{\sqrt{1+4h^2+4h\cos k}} + \frac1{2\pi} \int_{-\pi}^\pi \dd k \frac{(2h+\cos k)^2}{(1+4h^2+4h\cos k)^{3/2}}.
\ee
The integrals are well-behaved except possibly at $h = \pm\frac12$. While the second integral has a limit as $h \to \pm\frac12$, the first integral diverges logarithmically. Precisely, we can check that
\be
e_0''(h) \sim \tfrac1{2\pi} \log |h \pm \tfrac12|
\ee
around $h=-\frac12$ and $h=\frac12$. As is well-known, there are multiple ground states when $|h| < \frac12$, that display long-range order; there is a single disordered ground state when $|h| > \frac12$. More information about the quantum Ising model can be found in the recent works \cite{Gri,BG,Iof,CI,Bjo,Li2,Tas}.

\section{Acknowledgements}

We acknowledge useful discussions with Michael Aizenman, Jakob Bj\"ornberg, David Cimasoni, S\o ren Fournais, Marcin Lis, S\'ebastien Ott, Jan Philip Solovej, Yvan Velenik, Simone Warzel, and Nikos Zygouras. We are also grateful to the referee for useful comments.
GA is supported by the EPSRC grants EP/V520226/1 and EP/W523793/1.
DU is grateful to Chalmers University in Gothenburg, and to the Mathematisches Forschungsinstitut Oberwolfach for hosting him during part of this study.

\medskip
Conflicts of interest: none.

\renewcommand{\refname}{\small References}
\bibliographystyle{symposium}

\end{document}